\title{A Nearly Optimal Deterministic Online Algorithm for Non-Metric Facility Location}
\titlerunning{A Nearly Optimal Deterministic Online Algorithm for Non-Metric Facility Location}
\author{Marcin Bienkowski}{Institute of Computer Science, University of Wroc{\l}aw, Poland}{marcin.bienkowski@cs.uni.wroc.pl}{https://orcid.org/0000-0002-2453-7772}{}
\author{Björn Feldkord}{Heinz Nixdorf Institut \& Department of Computer Science, Paderborn University, Germany}{bjoernf@hni.upb.de}{https://orcid.org/0000-0001-6591-2420}{}
\author{Paweł Schmidt}{Institute of Computer Science, University of Wroc{\l}aw, Poland}{pawel.schmidt@cs.uni.wroc.pl}{}{}
\authorrunning{M.\, Bienkowski, B.\, Feldkord, P.\, Schmidt}
\keywords{Online algorithms, deterministic rounding, linear programming, facility location, set cover}
\newcommand{\F}[1]{F_{#1}}
\newcommand{\E}{\mathbf{E}}
\newcommand{\e}{\mathrm{e}}
\newcommand{\cost}{\mathsf{cost}} 
\newcommand{\maxcost}{\rho}
\newcommand{\set}{\mathsf{set}}
\newcommand{\primal}{\ensuremath{\mathcal{P}}\xspace}
\newcommand{\dual}{\ensuremath{\mathcal{D}}\xspace}
\newcommand{\ALG}{\textsc{Alg}\xspace}
\newcommand{\OPT}{\textsc{Opt}\xspace}
\newcommand{\INT}{\textsc{Int}\xspace}
\newcommand{\INTFAC}{\textsc{IntFac}\xspace}
\newcommand{\DET}{\textsc{Det}\xspace}
\newcommand{\FRAC}{\textsc{Frac}\xspace}
\newcommand{\tG}{\tilde{G}}
\newcommand{\hy}{\hat{y}}
\newcommand{\hY}{\hat{Y}}
\newcommand{\poly}{\mathrm{poly}}
\begin{document}
\maketitle

\begin{abstract}
In the online non-metric variant of the facility location problem, there is a
given graph consisting of a set $F$ of facilities (each with a certain opening
cost), a set $C$ of potential clients, and weighted connections between them.
The online part of the input is a sequence of clients from~$C$, and in response
to any requested client, an online algorithm may open an additional subset of
facilities and must connect the given client to an open facility. 

We give an online, polynomial-time deterministic algorithm for this problem,
with a competitive ratio of $O(\log |F| \cdot (\log |C| + \log \log |F|))$. The
result is optimal up to loglog factors. Our algorithm improves over the $O((\log
|C| + \log |F|) \cdot (\log |C| + \log \log |F|))$-competitive construction that
first reduces the facility location instance to a set cover one and then later
solves such instance using the deterministic algorithm by Alon et al.~[TALG
2006]. This is an asymptotic improvement in a typical scenario where $|F| \ll
|C|$.

We achieve this by a more direct approach: we design an algorithm for a
fractional relaxation of the non-metric facility location problem with clustered
facilities. To handle the constraints of such non-covering LP, we combine the
dual fitting and multiplicative weight updates approach. By maintaining certain
additional monotonicity properties of the created fractional solution, we can
handle the dependencies between facilities and connections in a~rounding
routine.

Our result, combined with the algorithm by Naor et al.~[FOCS 2011] yields the
first deterministic algorithm for the online node-weighted Steiner tree problem.
The resulting competitive ratio is $O(\log k \cdot \log^2 \ell)$ on graphs of
$\ell$ nodes and $k$ terminals.
\end{abstract}

\section{Introduction}

The facility location (FL) problem~\cite{AaByMa16} is one of the best-known
examples of network design problems, extensively studied both in operations
research and in computer science. Its simple definition, \textsf{NP}-hardness,
and rich combinatorial structure have led to developments of tools and solutions
in key areas of approximation algorithms, combinatorial optimization, and linear
programming. 

An instance of the FL problem consists of a set $F$ of facilities, each with a
certain opening cost, and a set $C$ of clients. $F$ and $C$ can be seen as two
sides of a bipartite graph. The undirected edges between them have lengths that
can either satisfy the triangle inequality (\emph{metric FL}) or be arbitrary
(\emph{non-metric FL}). The goal is to open a subset of facilities and connect
each client to an open facility. The total cost (the sum of opening and
connection costs) is subject to minimization. In the metric scenario, by taking
a metric closure, one can assume that each facility is reachable by each client,
but it is not the case for the non-metric variant.

\subparagraph*{Instances and Objectives.}

In this paper, we focus on an online variant of the non-metric FL problem. We
first formalize the offline variant in a way that makes a connection to the
online variant more apparent.

A \emph{facility-client graph} $G = (F,C,E,\cost)$ is a bipartite graph, whose
one side is the set~$F$ of facilities and another side is the set of clients
$C$. Set $E \subseteq F \times C$ contains available facility-client
connections (edges). We use function $\cost$ to denote both costs of opening
facilities and connection costs (edge lengths). All costs are non-negative.

An instance of the non-metric FL problem is a pair $(G,A)$, where $G = (F,C,E,\cost)$ is
a~facility-client graph and $A \subseteq C$ is a subset of \emph{active}
clients. A feasible solution to such instance is a~set of open (purchased)
facilities $F' \subseteq F$ and a subset of purchased edges~$E' \subseteq E$,
such that any active client $c \in A$ is connected by a purchased edge to an
open facility. The cost of such solution is equal to $\sum_{f \in F'}
\cost(f) + \sum_{e \in E'} \cost(e)$.

For any facility-client graph $G$, we define its aspect ratio $\Delta_G$ as the
ratio of the largest to smallest positive cost in $G$. These costs include both
facilities and connection costs.\footnote{In the standard definition of the
aspect ratio, only distances are taken into account.} Note that the aspect ratio
is a property of $G$ and is independent of the set of active clients $A$.

\subparagraph*{Online Scenario.}

In an \emph{online variant} of the FL problem, the facility-client graph $G$ is
known in advance, but neither elements of $A$ nor its cardinality are known
up-front by an~online algorithm \ALG. The clients from $A$ appear one by one.
Upon seeing a new active client, \ALG may purchase additional facilities and
edges, with the requirement that facilities and edges purchased so far must
constitute a feasible solution to all presented active clients. The total cost
of \ALG is denoted by $\ALG(G,A)$. (We sometimes use $\ALG(G,A)$ to also denote
the solution computed by \ALG.) Purchase decisions are final and cannot be
revoked later. The goal is to minimize \emph{the competitive ratio}, defined as
$\sup_{(G,A)} \{\ALG(G,A) / \OPT(G,A) \}$, where \OPT is the optimal (offline)
algorithm.


\subsection{Related Work}

Most of the prior work has been devoted to the offline scenario. While the
metric variant of the FL problem admits O(1)-approximation
algorithms~\cite{ByrAar10,ChaGuh05,ChuShm03,JaMMSV03,JaMaSa02,KoPlRa00,Li13,MaYeZh06,ShTaAa97},
the best approximation ratio for the non-metric one is $O(\log
|A|)$~\cite{Hochba82}, and it cannot be asymptotically improved unless
$\textsf{NP} \subseteq \textsf{DTIME}[n^{O(\log \log n)}]$~\cite{Feige98}. For a
more comprehensive treatment of the offline scenario, including a multitude of
variants, we refer the reader to the entry in the Encyclopedia of
Algorithms~\cite{AaByMa16} or the survey by Shmoys~\cite{Shmoys00}.

For the online metric FL, the problem was resolved over ten years ago by
Meyerson~\cite{Meyers01} and Fotakis~\cite{Fotaki08}: the lower and upper bounds
on the competitive ratio are $\Theta(\log |A| / \log \log |A|)$, both for
deterministic and randomized algorithms. Simpler deterministic algorithms
attaining slightly worse competitive ratio of $O(\log |A|)$ were given by
Anagnostopoulos et al.~\cite{AnBeUH04} and Fotakis~\cite{Fotaki07}. Note that
the optimal competitive ratio in the metric case is independent of the set $C$ of potential clients.


\subsection{Previous Work on Online Non-Metric Facility Location}

For the non-metric FL, the first and currently best online algorithm was a
\emph{randomized} algorithm by Alon et al.~\cite{AlAABN06}. It achieves the
competitive ratio of $O(\log |F| \cdot \log |A|)$. It is based on solving
a~natural fractional relaxation of the problem: there is a fractional
\emph{opening variable}~$y_f$ for each facility $f$ and a \emph{connection
variable} $x_{c,f}$ for a client $c$ and a \emph{covering} facility $f$ (facility
to which $c$ could be connected). Once a~client~$c$ arrives, for each covering facility $f$
independently, their algorithm increases either $y_f$ or~$x_{c,f}$, whichever is
smaller, using multiplicative update method (see, e.g.,~\cite{ArHaKa12}). The
client $c$ is considered fractionally served once the sum of terms $\min \{
x_{c,f},\, y_f \}$ over all covering facilities is at least $1$. The resulting
competitive ratio is $O(\log |F|)$. 

The computed fractional solution can be then rounded using a random threshold 
$\theta_f$ common for an opening variable $y_f$ and all connection
variables involving facility $f$. Once any variable exceeds its threshold, it is
rounded up to $1$ and the corresponding object (facility or connection) is
purchased. Dynamically adjusting $\theta_f$ to have expectation $\Theta(1/\log |A|)$
guarantees that the resulting integral solution is feasible with high
probability and the rounding part incurs a~factor of $O(\log |A|)$ in the
competitive ratio.

To the best of our knowledge, no non-trivial deterministic algorithm was
published so far. In particular, the online network design problems (including
the non-metric FL problem) have been listed as unresolved challenges by
Buchbinder and Naor~\cite[Section 1.1]{BucNao09b}. That said, the non-metric
facility location can be reduced to a set cover. A usable reduction (not
inducing an exponential blow-up of the input size) was given by Kolen and
Tamir~\cite{KolTam90}: it preserves the solution costs up to constant factors
and creates a set cover instance consisting of $m = \Theta(|F| + |C| \cdot \log
\Delta_G)$ sets and $n = \Theta(|C| \cdot \log \Delta_G)$ elements. Using doubling 
techniques described in \autoref{sec:doubling}, one could assume that $\Delta_G = O(|F|
\cdot |C|)$. Applying the deterministic algorithm for the online set cover problem by
Alon et al.~\cite{AlAABN09} yields a~solution whose competitive ratio
is $O(\log m \cdot \log n) = O((\log|C| + \log |F|) \cdot (\log |C| + \log \log |F|))$.


\subsection{Our Result}

In our paper, we improve the bound above, replacing the first factor of $O(\log
|C| + \log |F|)$ by $O(\log |F|)$. This is an asymptotic improvement in a
typical scenario where $|F| \ll |C|$.

\begin{restatable}[]{theorem}{maintheorem}
\label{thm:main_result}
There exists a deterministic polynomial-time $O(\log |F| \cdot (\log |C| + \log
\log |F|))$-competitive algorithm for the online non-metric facility
location problem on set~$F$ of facilities and set $C$ of clients.
\end{restatable}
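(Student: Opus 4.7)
The plan is to split the competitive ratio into two factors: first design a fractional online algorithm whose cost is at most $O(\log|F|)$ times the optimal integral cost, then deterministically round it online at a multiplicative overhead of $O(\log|C|+\log\log|F|)$; the product then matches the theorem. Before either step I would apply the doubling technique of \autoref{sec:doubling} to assume that the aspect ratio $\Delta_G$ is polynomially bounded in $|F|$ and $|C|$, losing only constant factors, and then partition $F$ into $O(\log|F|)$ geometric cost classes so that within a class opening costs agree up to a factor of two. The connection costs are handled analogously. This ``clustered facilities'' view is the prerequisite for both the dual-fitting analysis and the rounding.

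For the fractional algorithm I would work with the natural LP relaxation with an opening variable $y_f$ per facility and a connection variable $x_{c,f}$ per potential edge, where the service constraint for an active client $c$ reads $\sum_f \min\{x_{c,f},y_f\}\geq 1$, or equivalently $x_{c,f}\leq y_f$ together with $\sum_f x_{c,f}\geq 1$. Because the coupling $x_{c,f}\leq y_f$ destroys the pure covering structure, I would combine the multiplicative-weight update scheme of Alon et al.\ with an explicit dual solution in the spirit of dual fitting, as the abstract suggests. Upon the arrival of a client $c$, the algorithm would repeatedly select the cluster that currently offers the cheapest marginal progress toward coverage and perform a joint multiplicative update on the chosen $y_f$ and $x_{c,f}$, preserving $x_{c,f}\leq y_f$ throughout. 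The dual variables are incremented in lockstep, and the MWU potential argument, applied cluster by cluster, yields primal cost $O(\log|F|)\cdot\OPT$.

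The crucial additional step is to maintain \emph{monotonicity invariants} throughout the fractional execution: within each cost class the opening variables evolve monotonically in time and, for each active client, the induced ``bottleneck'' profile $(\min\{x_{c,f},y_f\})_f$ respects an ordering consistent across clients. At the moment a client fractionally reaches coverage $1$, this structure singles out a canonical ``responsible'' facility at each scale, and it is this property that replaces the random threshold in the Alon et al.\ rounding. For the deterministic rounding itself I would keep, for each facility and each edge, a counter tracking the growth of $y_f$ respectively $x_{c,f}$, and purchase the object once its counter crosses a fixed threshold of order $1/(\log|C|+\log\log|F|)$. The monotonicity invariant ensures that whenever a client becomes fractionally served at least one responsible facility--edge pair has crossed both thresholds, so the integral solution is feasible, while the overhead on cost is the inverse of the threshold, contributing the second factor in the target ratio.

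The main obstacle, and the technical heart of the proof, lies in making the rounding both deterministic and tight. A naive derandomization of the threshold scheme of Alon et al.\ via pessimistic estimators would couple the rounding of $y_f$ with that of every $x_{c,f}$, producing the $O((\log|C|+\log|F|)(\log|C|+\log\log|F|))$ bound that we wish to improve upon. Attaining the sharper $O(\log|F|\cdot(\log|C|+\log\log|F|))$ requires that the monotonicity invariant be strong enough that a single deterministic threshold per cost class, rather than per facility, suffices to drive the rounding while simultaneously (i) preserving feasibility for every arriving client and (ii) charging its cost back to the fractional solution of the corresponding cluster. Verifying this --- in particular designing the fractional update so that the responsible facility picked out by the monotonicity argument is exactly one whose counter has crossed the deterministic threshold --- is where the bulk of the work will sit.
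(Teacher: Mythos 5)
Your high-level decomposition (an $O(\log|F|)$-competitive fractional algorithm, a deterministic online rounding with overhead $O(\log|C|+\log\log|F|)$, and a doubling wrapper to tame the aspect ratio) matches the paper. But two of your concrete choices diverge in ways that break the argument. First, your ``clusters'' are geometric classes of \emph{opening} costs, whereas the construction that the theorem actually rests on clusters facilities \emph{per client by connection cost}: for each client $c$ and each power-of-two distance $t$ one forms $F_{c,t}$ and works with the prefix unions $S_{c,t}=\biguplus_{t'\le t}F_{c,t'}$, and the LP has one connection variable $x_{c,t}$ per (client, distance class) with serving constraint $\sum_t \min\{x_{c,t},\,y(F_{c,t})\}\ge 1$. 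This client-dependent distance clustering is what later ties the integral connection cost of $c$ to the fractional connection investment of $c$ (via the guarantee that some $\tau$ has $y(S_{c,\tau})\ge 1/2$ while $\sum_t t\cdot x_{c,t}\ge\tau/2$); cost classes of facilities give you no such link.

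Second, and more seriously, your rounding plan --- keep a counter per facility/edge and purchase once it crosses a fixed threshold of order $1/(\log|C|+\log\log|F|)$, with feasibility delegated to an unspecified ``monotonicity invariant'' that singles out a responsible facility--edge pair above threshold --- cannot work. When a client becomes fractionally served, the opening mass may legitimately be spread over many facilities, each with $y_f$ arbitrarily small (e.g.\ mass $1/k$ on each of $k$ cheap facilities), so no fixed deterministic threshold is ever crossed by any individual variable, and no invariant of the fractional dynamics can force otherwise without destroying the $O(\log|F|)$ guarantee. This is precisely why the paper does \emph{not} round individual variables against thresholds: it rounds facilities only, via a potential-function (pessimistic-estimator style) scheme over the element set $C\times T$ with covering sets $S_{c,t}$, whose guarantee is aggregate --- $y(S_{c,t})\ge 1/2$ implies at least one open facility in $S_{c,t}$ --- at cost overhead $O(\log|C\times T|)=O(\log|C|+\log\log\Delta_G)$. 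Combined with the structural property of the fractional algorithm above, this yields feasibility and the connection-cost bound. Your proposal explicitly defers this step (``where the bulk of the work will sit''), but that step is the heart of the theorem, and the mechanism you sketch for it (single threshold per cost class plus monotonicity) does not provide it. Relatedly, your fractional scheme (joint multiplicative updates on the ``cheapest'' cluster while preserving $x_{c,f}\le y_f$) is not shown to produce the aggregate half-open-prefix property that the rounding needs; the paper gets it by increasing connection variables linearly in order of distance (analyzed by dual fitting) while multiplicatively augmenting $y_f$ only in already-saturated clusters.
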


Our algorithm attains a nearly optimal competitive ratio, as no deterministic
algorithm can have a ratio smaller than $\Omega(\log |F| \cdot \log |C| /(\log
\log |F| + \log \log |C|))$. This follows by the lower bound for the online set
cover problem~\cite{AlAABN06,AlAABN09} and holds even for uniform facility
costs. If we restrict our attention to the \emph{polynomial-time} deterministic
solution, then a stricter lower bound of $\Omega(\log |F| \cdot \log |C|)$ holds
(assuming $\textsf{BPP} \neq \textsf{NP}$)~\cite{Korman04}.


\subparagraph*{Challenges.}

The description of the randomized algorithm by Alon et al.~\cite{AlAABN06} given
above seems deceptively simple, but it hides an important and subtle property,
implicitly exploited by the authors. Namely, the threshold $\theta_f$ is common
for facility~$f$ and all connections to it. This ensures the necessary
dependency: once $\min \{ x_{c,f},\, y_f \} \geq \theta_f$, the rounding
purchases \emph{both facility $f$ and a connection from $c$ to $f$}. (Note that
the left-hand side of this inequality is the amount that their fractional
solution controls.)

It is unclear how to directly extend this property to deterministic rounding. A
straightforward attempt would be to focus on facilities only and round them in
a deterministic fashion ensuring the necessary coverage of each client. However,
neglecting the connection costs in the rounding process easily leads to a
situation, where the facilities are rounded ``correctly'', but the cost of
connecting a client to the closest open facility in the integral solution is
incomparably larger than the corresponding fractional cost. 

We note that all known deterministic schemes that round fractional solutions
generated by the multiplicative updates operate in rather limited scenarios,
where elements have to be covered or packed and all important interactions
between elements are handled at the time of constructing the fractional solution.
This is the case for the deterministic rounding for the set cover
problem~\cite{AlAABN09,BucNao09} and the throughput-competitive virtual
circuit routing problem~\cite{AwAzPl93,BucNao09b}. These methods are based on
derandomizing the method of pessimistic estimators~\cite{Raghav88} in an~online
manner, by transforming a pessimistic estimator into a potential
function~\cite{Young95} that can be controlled by the deterministic rounding
process.

\subparagraph*{Our Techniques.}

In our solution, we create a new linear relaxation of the problem. We first
round the graph distances to powers of $2$. For any client, we cluster
facilities that have the same distance to this client. (Note that such clusters are
client-dependent.) To solve the fractional variant, we run two schemes in
parallel: we increase connection variables $x_{c,t}$ corresponding to clusters
at distance~$t$, and increase facility variables $y_f$ for all facilities in
``reachable'' clusters (where the corresponding connection variables are $1$). The
increases in these variables use two different frameworks: dual fitting for
linear increases of connection variables and a primal-dual scheme involving
multiplicative updates for facility variables. Ensuring an appropriate balance
between these two different types of updates is one of the technical
difficulties that we tackle in this paper.

We stop increasing variables once there exists a collection of clusters that are
both ``fractionally open'' (sum of variables $y_f$ within these clusters is
$\Omega(1)$) and ``reachable'' by the considered client. To argue about the
existence of such a collection, we use both LP inequalities and structural
properties of our fractional algorithm. 

Finally, we construct a deterministic rounding routine. We focus on facilities
only, neglecting whether particular clients are active or not and how far they
are from a given facility. However, we strengthen rounding properties, ensuring,
for (some) collections of clusters, that if the sum of opening variables in
these collections is $\Omega(1)$, then the integral solution contains an open
facility in one of these clusters. This ensures that, for a considered client
$c$, the integral solution contains a facility whose distance from $c$ is
asymptotically not larger than the cost invested for connecting $c$ in the
fractional solution. Ultimately, this yields the desired dependency between
facilities and connections.

\subparagraph*{Note about Up-Front Knowledge of the Facility-Client Graph.}

Unlike for the randomized variant, obtaining sub-linear guarantees for a
deterministic solution requires knowing a priori the set of potential
client-facility connections. To see this, consider a graph of $|F|$ facilities
with unit opening costs and the set of $|C| = |F|$ clients. The graph edges are
constructed dynamically as clients are activated and all revealed possible
connections are of cost $0$. The first active client can be connected to all
facilities. Each subsequent client can be connected to all facilities but the
ones already open by an algorithm. This way an online algorithm needs to
eventually open all facilities, for a~total cost of $|F|$. On the other hand,
the offline optimal algorithm can open the last facility opened by an online
algorithm and connect all clients to this facility paying just $1$. Thus, under
the unknown-graph assumption, the competitive ratio of any deterministic
algorithm would be at least $|F|$.


\subsection{Preliminaries and Paper Organization}

Let $T_G$ contain all powers of two between the largest and the smallest
positive distance (inclusively) and also number $0$. In particular, $T_G$
contains all distances in $G$ and $|T_G| \leq 2 + \log \Delta_G$. Whenever $G$
is clear from the context, we drop the $G$ subscript. 

We may assume that $F$ contains at least two facilities and $C$~contains at
least two clients, as otherwise the problem becomes trivial. For a facility $f
\in F$, let $\set(f)$ be the set of clients that may be connected to $f$. For
any client~$c \in C$ and distance $t \in T$, cluster $F_{c,t}$ contains all
facilities that are incident to $c$ using edges of cost $t$. Note that for a
fixed $c$, clusters $\F{c,t}$ are disjoint (no client has two connections of
different costs to the same facility).

\subparagraph*{Powers-of-Two Assumption.}

In the whole paper, we assume that all facilities and connection costs are
either equal to $0$ or are powers of $2$ and are at least $1$. This can be
easily achieved by initial scaling of  positive costs and distances, so that
they are at least $1$ and rounding positive ones up to the nearest power of two.
This transformation changes the competitive ratio at most by a~factor of~$2$.

\subparagraph*{Paper Overview.}

Our core approach is to solve a~carefully crafted fractional relaxation of the
problem (\autoref{sec:fractional}), and then round it in a deterministic
fashion (\autoref{sec:rounding}). This way, we obtain a deterministic online
algorithm \INT that on any input $(G = (F,C,E,\cost),A)$ computes a feasible
solution of cost 
\[ 
	\INT(G,A) \leq O(\log |F| \cdot (\log |C| + \log \log \Delta_G))
	\cdot \OPT(G,A) + 2 \cdot \max_{f \in F} \cost(f).
\]
Moreover \INT runs in time $\poly(|G|, |A|, \max_{e \in E} \cost(e), \max_{f \in
F} \cost(f))$. In \autoref{sec:doubling}, we apply doubling and edge pruning
techniques, to get rid of dependencies on costs in the running time and on
$\Delta_G$ in the competitive ratio, achieving guarantees of
\autoref{thm:main_result}.

\subparagraph*{Application to Node-Weighted Steiner Tree.}

Our result has an immediate application to the online node-weighted Steiner
tree (NWST) problem. Namely, when we combine \autoref{thm:main_result} 
with the randomized solution for 
NWST by Naor et al.~\cite{NaPaSi11}, we obtain the first deterministic algorithm
with polylogarithmic competitive ratio (see \autoref{sec:steiner}).


\section{Fractional Solution}
\label{sec:fractional}

We fix an instance $(G = (F,C,E,\cost),A)$ of the online non-metric facility
problem. For each facility $f$, we introduce an \emph{opening} variable $y_f
\geq 0$ (fractional opening of $f$) and for each client~$c$ and each distance $t
\in T$ a~\emph{connection} variable $x_{c,t} \geq 0$. Intuitively, $x_{c,t}$
denotes how much, fractionally, client $c$ invests into connections to
facilities from cluster $\F{c,t}$. For any set $F'$ of facilities we use $y(F')$
as a~shorthand for $\sum_{f \in F'} y_f$.

\subparagraph*{Primal Program.}
After $k$ clients from $A$ arrive (we denote their set by $A_k$), we consider
the following linear program~$\primal_k$. 
\begin{align*}
    \text{minimize \quad} & 
  	\sum_{f \in F} \cost(f) \cdot y_f +  \sum_{c \in A_k} \sum_{t \in T} t \cdot x_{c,t} && \\
    \text{subject to \quad} 
        & x_{c,t} \geq z_{c,t} && \text{for all $c \in A_k,\, t \in T$}, \\
        & y(\F{c,t})   \geq z_{c,t} && \text{for all $c \in A_k,\, t \in T,$} \\
            & \sum_{t \in T} z_{c,t} \geq 1
			&& \text{for all $c \in A_k$},
\end{align*}
and non-negativity of all variables.

\subparagraph*{Serving Constraints.}
The LP constraints combined are equivalent to the set of the following
(non-linear) requirements
\begin{align}
\label{eq:lp_non_linear}
    \sum_{t \in T} \min \left\{ x_{c,t},\; y(\F{c,t}) \right\} \geq 1
    		&& \text{for all $c \in A_k$}.
\end{align}
We call \eqref{eq:lp_non_linear} for client $c$ the \emph{serving constraint}
for client $c$. In our description, we omit variables~$z_{c,t}$ and the original
constraints, ensuring only that the serving constraints hold and implicitly
setting $z_{c,t} = \min \{ x_{c,t},\; y(\F{c,t}) \}$.

The LP above is indeed a valid relaxation of the FL problem. To see this, take
any feasible integral solution. For any facility $f$ opened in the integral
solution, set variable $y_f$ to $1$. For each client $c$ connected to facility
$f$, set variable $x_{c,\tau}$ to $1$, where $\tau = \cost(f,c)$. This
guarantees that $\min \{x_{c,\tau},\, y( F_{c,\tau}) \} = 1$, and thus the
serving constraint \eqref{eq:lp_non_linear} is satisfied for each client $c$. 

\subparagraph*{Dual Program.}
The program $\dual_k$ dual to $\primal_k$ is 
\begin{align*}
    \text{maximize \quad} & 
  	\sum_{c \in A_k} \gamma_c && \\
    \text{subject to \quad} 
        & \gamma_c \leq \alpha_{c,t} + \beta_{c,t} && \text{for all $c \in A_k,\, t \in T$}, \\
        & \alpha_{c,t} \leq t && \text{for all $c \in A_k,\, t \in T$}, \\
        & \sum_{c \in \set(f) \,\cap\, A_k} \beta_{c,\,\cost(f,c)} \leq \cost(f) && \text{for all $f \in F$},
\end{align*}
and non-negativity of all variables.


\subsection{Overview}
\label{sec:frac_overview}

Our algorithm \FRAC creates a solution to $\primal_k$, ensuring that the serving
constraint \eqref{eq:lp_non_linear} holds for all clients~$c \in A_k$. As
outlined in the introduction, the computed solution guarantees some additional
properties that are useful for the rounding part later.

Whenever a client $c$ arrives, \FRAC increases connection variables $x_{c,t}$
one by one starting from the smallest $t$, at the pace proportional to $1/t$. We
ensure that $x_{c,t} \in [0,1]$, i.e., once any of these variables reaches $1$,
\FRAC stops increasing them. A distance $t$, for which $x_{c,t} = 1$, is called
\emph{saturated}.

In parallel to manipulating variables $x_{c,t}$, \FRAC increases all variables
$y_f$ for facilities reachable from client $c$ using saturated distances. The
variables $y_f$ are increased using the multiplicative update rule~\cite{ArHaKa12} (scaled
appropriately to take costs of facilities into account). 

Together with the solution to $\primal_k$, \FRAC also constructs an
\emph{almost-feasible} solution to~$\dual_k$. That is, its solution to $\dual_k$
is feasible when all dual variables are scaled down by a factor of $O(\log
|F|)$. By the weak duality, the scaled-down value of this solution serves as a
lower-bound for the optimum. Thus, as typical for the primal-dual type of
analysis, the dual variables can be thought of as budgets whose increase
balances the increase of primal variables.


\subsection{Algorithm FRAC}

At the very beginning, before any client arrives, $\FRAC$ sets all variables
$y_f$ to $0$ for all positive-cost facilities and to $1$ for zero-cost ones.
There are no other variables as the set~$A_0$ of active clients is empty. Note
that the dual program already contains the last type of constraints, but the
sums on their left-hand sides range over empty sets of $\beta$ variables, and
hence these constraints are trivially satisfied.

Whenever a new client $c$ arrives in step $k$, \FRAC updates the primal (dual)
programs from $\primal_{k-1}$ ($\dual_{k-1}$) to $\primal_k$ ($\dual_k$), and
then computes a feasible solution to $\primal_k$ (based on the already
created solution to $\primal_{k-1}$) and a nearly-feasible solution to
$\dual_k$.

\begin{description} 
    \item[New variables in primal and dual programs:] 
        \FRAC sets $x_{c,t} \gets 0$ for all $t \in T \setminus
        \{0\}$ and sets $x_{c,0} \gets 1$. In the dual solution, it sets 
        $\gamma_c \gets 0$, 
        $\alpha_{c,t} \gets 0$ and $\beta_{c,t} \gets 0$ for all $t \in T$.

    \item[Update primal program:] A new serving constraint $\sum_{t \in T} \min
        \{ x_{c,t}, \, y(\F{c,t}) \} \geq 1$ appears in the primal program (and
        is violated unless $y(\F{c,0}) \geq 1$). As we never decrease primal
        variables, the serving constraints \eqref{eq:lp_non_linear} that existed
        already in $\primal_{k-1}$ are satisfied and will not become
        violated.
        
    \item[Update dual program:] New constraints appear in the dual program and
    new variables $\beta_{c,t}$ appear on the left-hand side of the already
    existing inequalities.  Since the new variables are initialized to $0$, the
    validity of all dual constraints is unaffected.
    
    \item[Update primal and dual solutions:] 
    Let $T_c^{1} = \{ t \in T : x_{c,t} \geq 1 \}$ be the set 
    of saturated distances, i.e., initially \FRAC sets $T_c^1 \gets \{ 0 \}$.
    While the serving constraint for~$c$ is violated, \FRAC executes the 
    \emph{update operation} consisting of the following steps:
    \begin{enumerate} 
        \item Set $\gamma_c \gets \gamma_c +1$.
        \item For each $t\in T$, independently, adjust one dual
        variable: if $t \in T_c^1$, then set $\beta_{c,t} \gets \beta_{c,t} + 1$ and 
        otherwise set $\alpha_{c,t} \gets \alpha_{c,t} + 1$.
        \item If $T_c^1 \subsetneq T$,   
            choose \emph{active distance} $t^* \gets \min (T \setminus T_c^1)$ 
            to be the smallest non-saturated distance,
            and then set $x_{c,t^*} \gets x_{c,t^*} + 1/t^*$.
            (Note that $0 \in T_c^1$, and thus $t^* > 0$.)
        \item 
        \label{step:augmentation_step}
        For any facility $f \in \biguplus_{t \in T_c^1} \F{c,t}$, independently, 
        perform \emph{augmentation of $y_f$}, setting
            \[
                y_f \gets \left (1 + \frac{1}{\cost(f)} \right) \cdot y_f + \frac{1}{|F|\cdot \cost(f)}  .
            \]
        \item Update the set of saturated distances, setting $T_c^1 \gets \{ t \in  T : x_{c,t} \geq 1 \}$.
    \end{enumerate}
\end{description}
We now argue that if variable $y_f$ is augmented in Step 4, then $\cost(f) > 0$
(i.e., Step 4 is well defined). Let $\tau = \cost(c,f)$. As $y_f$ is augmented,
the distance $\tau$ is saturated ($x_{c,\tau} = 1$). If $\cost(f) = 0$, then
$y_f$ would have been initialized to $1$, and then $y(F_{c,\tau}) \geq 1$, in
which case the serving constraint for $c$ would be already satisfied.

\subparagraph*{Sidenote about T.} 
For the sake of coherence and more streamlined analysis,
\FRAC increases also connection variables $x_{c,t}$ to empty sets $\F{c,t}$, i.e., 
invests into distances to non-existing facilities. Fixing this overspending would 
not lead to asymptotic improvement of the performance.


\subsection{Structural Properties}

We focus on a single client $c$ processed by \FRAC. We start with a property of
connection variables~$x_{c,t}$. The distances from $T$ that are neither
saturated nor active are called \emph{inactive}. The following claim follows by an
immediate induction on update operations performed by \FRAC.

\begin{lemma}
\label{lem:structural}
At all times when a client $c$ is considered, $x_{c,t} \in [0,1]$ for any $t \in T$.
In particular, $x_{c,t} = 1$ for any saturated distance $t \in T_c^1$. 
Furthermore, 
\begin{enumerate}
\item either all distances are saturated, 
\item or there exists 
an active distance $t^* > 0$, such that (i) all smaller distances are saturated, 
and (ii) all larger distances are inactive and the corresponding variables $x_{c,t}$ 
are equal to zero.
\end{enumerate}
Augmentation is performed on variables $y_f$ corresponding to facilities 
whose distance from $c$ is saturated.
\end{lemma}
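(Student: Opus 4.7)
My plan is to proceed by induction on the number of update operations executed by \FRAC while client $c$ is being processed. The base case is the state right after the "New variables in primal and dual programs" step: we have $x_{c,0}=1$ and $x_{c,t}=0$ for all $t\in T\setminus\{0\}$, so $T_c^1=\{0\}$, every positive $t$ is inactive with $x_{c,t}=0$, and the smallest positive element of $T$ is queued up to become active on the first update. All parts of the claim hold.

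For the inductive step, the key quantitative observation comes from the powers-of-two assumption: every positive $t\in T$ is an integer power of $2$ (hence at least $1$), so the increment $1/t^*$ applied to $x_{c,t^*}$ in Step 3 satisfies $t^*\cdot(1/t^*)=1$ exactly. Consequently $x_{c,t^*}$ can never overshoot $1$, and if and only if $t^*$ is hit by its $t^*$-th increment does it reach the value $1$ precisely. If $T_c^1=T$ before the update, the guard in Step 3 fails, no $x$-variable is touched, and the invariants are preserved trivially. Otherwise, by the inductive hypothesis there is an active distance $t^*$ with all smaller distances in $T_c^1$ and all larger distances inactive with $x_{c,t}=0$; the assignment $t^*\gets\min(T\setminus T_c^1)$ in Step 3 selects exactly this $t^*$, and $x_{c,t^*}$ is incremented by $1/t^*$.

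Two subcases remain. If the increment leaves $x_{c,t^*}<1$, then $T_c^1$ is unchanged in Step 5, $t^*$ remains the unique active distance, the larger distances remain inactive with $x_{c,t}=0$, and the invariant is preserved. If the increment brings $x_{c,t^*}$ to exactly $1$, then $t^*$ joins $T_c^1$ in Step 5; the new active distance, if one exists, is the next larger element of $T$, which by the inductive hypothesis was inactive with $x$-value $0$, so all three structural conditions transfer. In every case $x_{c,t}\in[0,1]$ for all $t$, saturated distances have $x_{c,t}=1$, and the saturated/active/inactive trichotomy holds as stated.

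The last sentence of the lemma is read off directly from Step 4, whose range is $\biguplus_{t\in T_c^1}\F{c,t}$, i.e., facilities whose distance to $c$ lies in $T_c^1$. I do not anticipate any real obstacle; the only delicate point is the no-overshoot argument, which depends essentially on the powers-of-two assumption---without it one would have to argue separately about truncation at $1$ and the precise value of $x_{c,t^*}$ at the moment of saturation.
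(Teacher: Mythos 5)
Your proof is correct and matches the paper's intended argument: the paper states that the lemma ``follows by an immediate induction on update operations performed by \FRAC'' without spelling it out, and your induction (base case at client arrival, the no-overshoot observation that each positive $t\in T$ is an integer power of two so $x_{c,t^*}$ reaches exactly $1$ after $t^*$ increments, and the case analysis on whether $t^*$ becomes saturated) is precisely that induction made explicit. The final claim about augmentations being restricted to clusters $\F{c,t}$ with $t \in T_c^1$ is indeed read off directly from Step~4, as you say.
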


\begin{lemma}
\label{lem:primal_is_feasible}
On any input $(G = (F,C,E,\cost), A)$, \FRAC returns a feasible solution and
runs in time $\poly(|G|, |A|, \max_{e \in E} \cost(e), \max_{f \in F}
\cost(f))$.
\end{lemma}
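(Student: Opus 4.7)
The plan is to split the lemma into two claims: (i) feasibility of the returned solution, and (ii) polynomial termination of the inner while loop. Claim (i) is essentially by design: the while loop for the $k$-th client only exits when the serving constraint \eqref{eq:lp_non_linear} for that client holds, and since \FRAC never decreases a primal variable, serving constraints already satisfied in $\primal_{k-1}$ remain satisfied in $\primal_k$; assuming the instance is feasible (so each active client is incident to at least one facility, as otherwise no online algorithm can serve it), every such constraint is attainable. So the real work lies in bounding the iteration count per client.

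For a fixed client $c$, the plan is to split its processing into two phases. In the first (``saturation'') phase, \autoref{lem:structural} guarantees that while $T_c^1 \neq T$ there is a well-defined active distance $t^\star$, and Step~3 increases $x_{c,t^\star}$ by $1/t^\star$ on each iteration. Hence $t^\star$ remains active for at most $t^\star$ iterations before joining $T_c^1$. Summing over $t \in T \setminus \{0\}$ and using that $T$ consists of powers of two bounded by $M := \max\{\max_{e \in E}\cost(e), \max_{f \in F}\cost(f)\}$, the geometric series yields an $O(M)$ bound on the iterations of this phase. The second (``multiplicative'') phase then begins once every distance---in particular the distance $\tau := \cost(c, f^\star)$ to an arbitrary facility $f^\star$ incident to $c$---is saturated; from that point on, Step~4 augments $y_{f^\star}$ on every iteration.

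In the second phase, if $\cost(f^\star)=0$ then $y_{f^\star}=1$ by initialization and the serving constraint already holds. Otherwise, unrolling the recurrence $y_{f^\star}\gets(1+1/\cost(f^\star))\,y_{f^\star} + 1/(|F|\cdot \cost(f^\star))$ from $y_{f^\star}=0$ yields the closed form $y_{f^\star}^{(k)} = \bigl((1+1/\cost(f^\star))^k - 1\bigr)/|F|$, which exceeds $1$ once $k = O(\cost(f^\star)\cdot \log|F|) = O(M\log|F|)$. At that point $y(\F{c,\tau}) \geq 1 = x_{c,\tau}$, the serving constraint is satisfied, and the loop exits. Combining both phases, each client triggers at most $O(M\log|F|)$ iterations, and since each iteration touches $O(|F|)$ opening variables and evaluates the serving constraint in $O(|F|\cdot|T|)$ time, the overall work across $|A|$ clients is $\poly(|G|, |A|, M)$, as required. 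The step I expect to be the main obstacle is the closed-form analysis of the multiplicative recurrence---verifying that the exponential growth with base $1+1/\cost(f^\star)$ overcomes the $1/|F|$ additive kick within $O(\cost(f^\star)\log|F|)$ steps, and confirming that a single witness facility $f^\star$ suffices for the serving constraint, so that one need not juggle $y$-mass across clusters.
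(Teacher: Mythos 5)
Your proposal is correct and takes essentially the same approach as the paper: feasibility is by design (the loop exits only once the serving constraint holds, and monotonicity preserves earlier constraints), and termination is bounded per client in two phases — at most $\sum_{t \in T} t = O(\max_{e \in E} \cost(e))$ update operations to saturate all connection variables, followed by augmentations until some opening variable reaches $1$. The only difference is in the second phase: the paper uses the additive kick alone (each $y_f$ needs at most $|F| \cdot \cost(f)$ augmentations, giving the bound $|F| \cdot \max_{f \in F} \cost(f)$), whereas you track the multiplicative growth of a single witness facility $f^\star$ to get $O(\cost(f^\star) \cdot \log |F|)$, essentially re-deriving \autoref{lem:log_augmentations}; both yield the claimed polynomial bound.
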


\begin{proof}
Fix any client $c \in A$. By the definition of $\FRAC$, it takes $t$ update
operations to increase value $x_{c,t}$ from $0$ to $1$. Hence, after $\sum_{t \in
T} t < 2 \cdot \max_{e \in E} \cost(e)$ update operations, all connection
variables are equal to $1$. From that point on, all variables $y_f$ for $f \in
\biguplus_{t \in T} \F{c,t}$ are augmented in each update operation. Each
variable $y_f$ can be augmented at most $|F| \cdot \cost(f)$ times till it
reaches or exceeds~$1$. That is, after at most $2 \cdot \max_{e \in E} \cost(e)
+ |F| \cdot \max_{f \in F} \cost(f)$ update operations, the serving constraint
is satisfied, i.e., the generated solution is feasible.
\end{proof}


The following lemma shows the crucial property of \FRAC. Namely for any client
$c$, there exist a~``good'' distance $\tau$, such that the collection of
clusters $F_{c,t}$ at distance $t \leq \tau$ is together fractionally half-open
and that \FRAC invested $\Omega(\tau)$ into connecting client $c$. For any
client~$c$ and distance $t \in T$, we define a set $S_{c,t}$ to be a collection
of clusters alluded to in the introduction. 
\[
    S_{c,t} = \biguplus_{t' \in T \,:\, t' \leq t} F_{c,t'}
\]

\begin{lemma}
\label{lem:good_distance}
Once \FRAC finishes serving client $c$, there exists a distance $\tau \in T$,
such that $y(S_{c,\tau}) \geq 1/2$ and $\sum_{t \in T} t \cdot x_{c,t}  \geq \tau/2$.
\end{lemma}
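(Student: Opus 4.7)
The natural candidate is to choose $\tau$ as the smallest distance in $T$ for which $y(S_{c,\tau}) \geq 1/2$ already exceeds $1/2$, and then read off the required bound on $\sum_t t \cdot x_{c,t}$ from the serving constraint. So first I would argue that such a $\tau$ exists. When \FRAC finishes serving $c$, the serving constraint \eqref{eq:lp_non_linear} holds, and since the clusters $\F{c,t}$ are disjoint for a fixed $c$, we have
\[
  y(S_{c,\,\max T}) \;=\; \sum_{t \in T} y(\F{c,t}) \;\geq\; \sum_{t \in T} \min\{x_{c,t},\,y(\F{c,t})\} \;\geq\; 1,
\]
so the set $\{t \in T : y(S_{c,t}) \geq 1/2\}$ is nonempty and $\tau := \min \{ t \in T : y(S_{c,t}) \geq 1/2 \}$ is well defined.

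\textbf{Second step.} With $\tau$ fixed, I would split the serving-constraint sum at $\tau$. By minimality of $\tau$, for every $t' \in T$ with $t' < \tau$ we have $y(S_{c,t'}) < 1/2$, hence (again using disjointness of clusters)
\[
  \sum_{t < \tau} \min\{x_{c,t},\,y(\F{c,t})\} \;\leq\; \sum_{t < \tau} y(\F{c,t}) \;=\; y(S_{c,\,t_{\mathrm{prev}}}) \;<\; 1/2,
\]
where $t_{\mathrm{prev}}$ is the largest element of $T$ smaller than $\tau$ (and the sum is $0$ if no such $t_{\mathrm{prev}}$ exists). Combining with $\sum_{t \in T} \min\{x_{c,t},\,y(\F{c,t})\} \geq 1$ yields $\sum_{t \geq \tau} \min\{x_{c,t},\,y(\F{c,t})\} \geq 1/2$.

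\textbf{Third step.} I would then use the trivial bound $\min\{x_{c,t},\,y(\F{c,t})\} \leq x_{c,t}$ on the tail to get $\sum_{t \geq \tau} x_{c,t} \geq 1/2$, and since every $t$ in this tail satisfies $t \geq \tau$,
\[
  \sum_{t \in T} t \cdot x_{c,t} \;\geq\; \sum_{t \geq \tau} t \cdot x_{c,t} \;\geq\; \tau \cdot \sum_{t \geq \tau} x_{c,t} \;\geq\; \tau/2,
\]
which is the second required inequality; the first, $y(S_{c,\tau}) \geq 1/2$, holds by construction of $\tau$.

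\textbf{Anticipated obstacle.} The argument is essentially a one-line consequence of the serving constraint, so I don't expect a real combinatorial difficulty. The only subtlety worth being careful about is the corner case $\tau = 0$, where the second inequality becomes vacuous (the right-hand side is $0$) and one should check that the existence argument still goes through — it does, because $y(S_{c,0}) \geq 1/2$ is allowed. I would also make sure to explicitly invoke disjointness of the clusters $\F{c,t}$ for fixed $c$ (as noted in the preliminaries) when identifying $\sum_t y(\F{c,t})$ with $y(S_{c,\,\max T})$.
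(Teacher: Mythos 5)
Your proposal is correct, but it takes a genuinely different route than the paper. You pick $\tau$ as the \emph{smallest} distance whose prefix is half-open, $\tau = \min\{t \in T : y(S_{c,t}) \geq 1/2\}$, and then extract everything from the serving constraint alone: the prefix below $\tau$ contributes less than $1/2$ to $\sum_{t} \min\{x_{c,t},\, y(\F{c,t})\}$ (by minimality and disjointness of the clusters), so the tail at distances $\geq \tau$ contributes at least $1/2$, and bounding the minima by $x_{c,t}$ gives $\sum_{t} t \cdot x_{c,t} \geq \tau/2$. The paper instead splits the sum at $t^*$, the largest distance with $x_{c,t^*} > 0$, distinguishes whether the term at $t^*$ is at least $1/2$ or not, and in the second case invokes the structural properties of \FRAC (\autoref{lem:structural}) to conclude $x_{c,\tau} = 1$ for the distance just below $t^*$, which even yields $\sum_t t \cdot x_{c,t} \geq \tau$ there. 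The difference in what the two arguments buy: the paper's proof leans on the algorithm-specific monotonicity (all distances below the active one are saturated) and gets a marginally stronger constant in one case, whereas your argument is purely an LP-level fact — it works for \emph{any} non-negative assignment satisfying the serving constraint \eqref{eq:lp_non_linear}, independently of how \FRAC builds it, and is arguably simpler since it avoids \autoref{lem:structural} entirely. Your handling of the corner cases (existence of $\tau$ via $y(S_{c,\max T}) \geq 1$, the empty prefix when $\tau = 0$) is also sound.
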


\begin{proof}
We consider the state of variables once \FRAC finishes serving client $c$. Let
$t^* > 0$ be the largest distance from $T$ for which $x_{c,t^*} > 0$. As the
serving constraint for client $c$ is satisfied, we have
\begin{equation}
    \label{eq:expanded_serving_cond}
	1 \leq \sum_{t \in T} \min \left\{ x_{c,t},\, y(\F{c,t}) \right\} 
     = \min \left\{ x_{c,t^*},\, y(\F{c,t^*}) \right\} 
     + \sum_{t\in T \,:\, t < t^*} \min \left\{ x_{c,t},\, y(\F{c,t}) \right\} .
\end{equation}
We pick $\tau$ depending on the value of the last term
of~\eqref{eq:expanded_serving_cond}. 

If $\min \{ x_{c,t^*},\, y(\F{c,t^*}) \} \geq 1/2$, we set $\tau = t^*$. Then,
$y(S_{c,\tau}) \geq y(\F{c,\tau}) \geq \min \{ x_{c,\tau},\, y(\F{c,\tau}) \} $ $
\geq 1/2$, and the first condition of the lemma follows. Furthermore,
$\sum_{t\in T} t \cdot x_{c,t} \geq \tau \cdot x_{c,\tau} \geq \tau/2$. 

Otherwise, $\min \left\{ x_{c,t^*},\, y(\F{c,t^*}) \right\} < 1/2$, and then, by
\eqref{eq:expanded_serving_cond}, $\sum_{t\in T \,:\, t < t^*} \min \{
x_{c,t},\, y(\F{c,t}) \} \geq 1/2$. In such case, we choose $\tau$ as the
largest distance from~$T$ smaller than~$t^*$. Then 
\[ 
    y(S_{c,\tau}) = \sum_{t\in T \,:\, t \leq \tau} y(\F{c,t}) 
\geq \sum_{t\in T \,:\, t \leq \tau} \min \{ x_{c,t},\, y(\F{c,t}) \} \geq 1/2,
\]
i.e., the first condition of the lemma holds. By \autoref{lem:structural},
either $t^*$ is active at the end of processing~$c$ or all distances become
saturated and $t^*$ is the largest distance from $T$. In either case, $x_{c,t} =
1$ for any distance $t < t^*$, and thus in particular $x_{c,\tau} = 1$. Hence,
the second part of the lemma holds as $\sum_{t\in T} t\cdot x_{c,t} \geq \tau
\cdot x_{c,\tau} = \tau$.
\end{proof}


\subsection{Dual Solution is Almost Feasible}

Using primal-dual analysis, we may show that the generated dual solution
violates each constraint at most by a factor of $O(\log |F|)$. 

\begin{lemma}
\label{lem:log_augmentations}
For any facility $f$, \FRAC augments $y_f$ at most $O(\log |F|) \cdot \cost(f)$ times.
\end{lemma}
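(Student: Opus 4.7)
The plan is to combine a closed-form expression for $y_f$ after $k$ augmentations with the observation that augmentations must stop once $y_f$ crosses $1$.

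First, I would derive a closed form for $y_f$ in terms of the number of times Step~4 has fired on it. Starting from $y_f = 0$ (which is the initial value whenever $\cost(f) > 0$, as guaranteed by the remark right after Step~4) and iterating the update rule $y_f \gets (1 + 1/\cost(f)) \cdot y_f + 1/(|F|\cdot \cost(f))$, a routine induction yields
\[
  y_f \;=\; \frac{(1 + 1/\cost(f))^k - 1}{|F|}
\]
after exactly $k$ augmentations. Using $\ln(1 + 1/\cost(f)) = \Theta(1/\cost(f))$ for $\cost(f) \geq 1$ (which holds by the powers-of-two assumption), this shows $y_f \geq 1$ once $k = \Omega(\cost(f) \log |F|)$.

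The heart of the argument is the second step: showing that $y_f$ receives no further augmentations once it reaches $1$. Fix a client $c$ served after $y_f \geq 1$ and let $\tau = \cost(c,f)$, so that $f \in F_{c,\tau}$ and hence $y(F_{c,\tau}) \geq y_f \geq 1$. By \autoref{lem:structural}, Step~4 can augment $y_f$ only in an iteration whose pre-update set $T_c^1$ contains $\tau$, which means $x_{c,\tau} = 1$ at that point. But then the single term $\min\{x_{c,\tau}, y(F_{c,\tau})\} = 1$ already satisfies the serving constraint for $c$, so the while-loop has terminated at the top-of-loop test before this iteration begins. To make this uniform across clients, I would inspect the two ways $\tau$ can enter $T_c^1$: either $\tau = 0$ and $\tau \in T_c^1$ from initialization (in which case the loop never enters in the first place), or $\tau > 0$ is added by Step~5 of some earlier iteration where it was the active distance (in which case the next top-of-loop test exits immediately, because Step~4 of that earlier iteration did not touch $y_f$, as $\tau$ was still outside the pre-update $T_c^1$).

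The main subtlety will be this termination argument, since it depends on the precise interleaving of Steps~3--5 within a single iteration and has to hold for every client served after $y_f$ crosses $1$, not only the one responsible for the crossing. Once it is in place, the overall bound $O(\cost(f) \log |F|)$ follows by taking the smallest $k$ for which the closed form above exceeds $1$, plus at most one extra augmentation that pushes $y_f$ past the threshold.
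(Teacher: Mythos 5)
Your proposal is correct and follows essentially the same route as the paper: the core observation that $y_f$ can only be augmented while $y_f < 1$ (because augmentation requires the distance $\tau = \cost(c,f)$ to be saturated, $x_{c,\tau} = 1$, while the serving constraint is still violated, forcing $y(F_{c,\tau}) < 1$) is exactly the paper's first step, stated in contrapositive form. Your closed-form expression $y_f = ((1+1/\cost(f))^k - 1)/|F|$ replaces the paper's two-phase count (additive growth to $1/|F|$, then multiplicative growth to $1$), but both yield the same $O(\cost(f)\cdot\log|F|)$ bound, so the difference is purely cosmetic.
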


\begin{proof}
First, we observe that variable $y_f$ can be augmented only if prior to
augmentation it is smaller than~$1$. To show that, observe that the augmentation
of $y_f$ occurs only when \FRAC processes an active client $c \in \set(f)$. Let
$\tau = \cost(f,c)$, i.e., $f \in F_{c,\tau}$. As \FRAC augments~$y_f$, the
distance $\tau$ must be saturated, i.e., $x_{c,\tau} = 1$. On the other hand,
the serving constraint~\eqref{eq:lp_non_linear} is not satisfied when $y_f$ is
augmented, and thus $\min \{ x_{c,\tau},\, y(\F{c,\tau}) \} < 1$ which implies
that $y_f$ must be strictly smaller than $1$. 

In particular, if $\cost(f) = 0$, then $y_{f}$ is set to $1$ immediately at the
beginning, and hence no augmentation of $y_f$ is ever performed, and the lemma
follows trivially. As all non-zero costs are at least $1$, below we assume
$\cost(f) \geq 1$. 

During the first $\cost(f)$ augmentations, the value of $y_{f}$ increases from
$0$ to at least $1/|F|$ (due to additive increases). Next, during the subsequent
$\lceil \log_{1+1/\cost(f)}|F| \rceil$ augmentations, the value of $y_{f}$
reaches at least $1$ (due to multiplicative increases), and hence it will not be
augmented further. In total, the number of augmentations is upper-bounded by
$\cost(f) + \lceil \log_{1+1/\cost(f)}|F| \rceil = O(\log |F|) \cdot \cost(f)$.
In the last relation, we used $\cost(f) \geq 1$.
\end{proof}

\begin{lemma}
\label{lem:dual_almost_feasible}
\FRAC violates each dual constraint at most by a factor of $O(\log |F|)$.
\end{lemma}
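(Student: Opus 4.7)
The dual program has three constraint families and my plan is to verify each separately, showing that only the third incurs a super-constant blow-up. The first family, $\gamma_c \leq \alpha_{c,t}+\beta_{c,t}$, is satisfied exactly: each update operation for $c$ raises $\gamma_c$ by $1$ (Step 1) and, for each $t$, raises exactly one of $\alpha_{c,t},\beta_{c,t}$ by $1$ (Step 2), so by induction the invariant $\alpha_{c,t}+\beta_{c,t}=\gamma_c$ holds at all times.

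For the second family, $\alpha_{c,t}\leq t$, I would bound the number of update operations in which $t\notin T_c^1$, since each such operation is the only place $\alpha_{c,t}$ can grow. By \autoref{lem:structural}, the active distance sweeps monotonically through $T$ from the smallest element upward, and raising $x_{c,t'}$ from $0$ to $1$ takes exactly $t'$ updates once $t'$ becomes active. If $t$ eventually saturates, then $\alpha_{c,t} \leq \sum_{t'\in T,\,t'\leq t} t' \leq 2t$ (a geometric sum over powers of two). If $t$ never saturates, then at termination the active distance $t^*$ satisfies $t>t^*$, which gives $t\geq 2t^*$ since both are positive powers of two, and hence the total number of update operations is at most $\sum_{t'\in T,\,t'\leq t^*} t' \leq 2t^* \leq t$. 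Either way $\alpha_{c,t}\leq 2t$, a constant-factor violation.

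For the third family, $\sum_{c\in\set(f)\cap A_k}\beta_{c,\cost(f,c)}\leq\cost(f)$, the key bijection is that during the update operation for some client $c$, Step 4 augments $y_f$ iff the distance $\cost(f,c)$ is in $T_c^1$ at that moment, which is precisely the condition under which Step 2 increments $\beta_{c,\cost(f,c)}$ (applied with $t=\cost(f,c)$). Consequently, the left-hand side of the constraint equals the total number of augmentations of $y_f$ across all processed clients, which by \autoref{lem:log_augmentations} is bounded by $O(\log|F|)\cdot\cost(f)$.

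Taking the worst of the three families yields the claimed $O(\log|F|)$ violation. The only non-trivial step is the case split for $\alpha_{c,t}$ when $t$ never saturates, for which \autoref{lem:structural}'s dichotomy between active and inactive distances is essential; everything else is direct bookkeeping built into the primal–dual update rules and the already-proven augmentation bound.
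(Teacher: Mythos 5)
Your proposal is correct and follows essentially the same route as the paper: equality for the first constraint family, bounding increments of $\alpha_{c,t}$ by the saturation times $\sum_{t' \leq t} t' \leq 2t$ for the second, and charging each increment of $\beta_{c,\cost(f,c)}$ to an augmentation of $y_f$ in the same update operation, then invoking \autoref{lem:log_augmentations}, for the third. Your extra case split for the second family (when $t$ never saturates) is harmless but unnecessary, since the bound by the total saturation time of distances up to $t$ covers both cases uniformly.
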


\begin{proof}
We show the claim for all types of constraints in the dual program.
\begin{enumerate}
    
\item 
Each dual constraint $\gamma_c \leq \alpha_{c,t} + \beta_{c,t}$ always
holds with equality as together with $\gamma_c$, for each $t\in T$, \FRAC
increments either $\alpha_{c,t}$ or $\beta_{c,t}$.

\item 
Consider a constraint $\alpha_{c,t} \leq t$. Initially $\alpha_{c,t} = 0$
when client $c$ appears, and it is incremented in an update operation only if
distance $t$ is not saturated. Distances are processed from the smallest to the largest,
and it takes exactly $t'$ update operations for a~distance $t' \in T$ to become
saturated. Therefore, $\alpha_{c,t}$ can be incremented at most $\sum_{t' \in T
: t' \leq t} t'$ times. If $t = 0$, then $\alpha_{c,t} = 0$ trivially.
Otherwise, we use the fact that $T \setminus \{0\}$ contains only powers of $2$,
and hence $\alpha_{c,t} \leq \sum_{t' \in T : t' \leq t} t' < 2 \cdot t$.

\item 
Finally, fix any facility $f^* \in F$ and consider the constraint $\sum_{c \in
\set(f^*) \,\cap\, A_k} \beta_{c,\,\cost(f^*,c)} \leq \cost(f^*)$. We want to
show that this constraint is violated at most by a factor of $O(\log |F|)$, i.e., that
\begin{equation}
\label{eq:dual_feasibility}
    \sum_{c \in \set(f^*) \,\cap\, A_k} \beta_{c,\,\cost(f^*,c)} \leq O(\log |F|) \cdot \cost(f^*) .
\end{equation}
The left-hand side of \eqref{eq:dual_feasibility} is initially $0$ and it is
incremented only when \FRAC processes some active client $c^* \in \set(f^*)$. In
a single update operation, \FRAC may increment multiple $\beta$ variables, but
only one of them, namely $\beta_{c^*,\,\cost(f^*,c^*)}$, contributes to the
growth of the left-hand side of \eqref{eq:dual_feasibility}. If variable
$\beta_{c^*,\,\cost(f^*,c^*)}$ is incremented, it means that the distance $\tau
= \cost(f^*,c^*)$ is already saturated, i.e., $\tau \in T_{c^*}^1$. Thus, in the
same update operation, \FRAC augments all variables $y_f$ for $f \in
\biguplus_{t \in T_{c^*}^1} F_{c^*,t}$. This set of facilities includes cluster
$F_{c^*,\tau}$ and thus also facility $f^*$. By
\autoref{lem:log_augmentations}, the augmentation of $y_{f^*}$ may happen at most
$O(\log |F|) \cdot \cost(f^*)$ times, which implies our claim.
\qedhere
\end{enumerate}
\end{proof}


\subsection{Competitive Ratio of FRAC}

Finally, we show that in each update operation the growth of the primal cost is
at most constant times the growth of the dual cost. This will imply the
competitive ratio of \FRAC.

\begin{lemma}
\label{lem:primal_dual_relation}
For any step $k$, the value of the solution to $\primal_k$ computed by \FRAC is
at most $3$ times the value of its solution to $\dual_k$. 
\end{lemma}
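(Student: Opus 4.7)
The plan is to argue by induction over update operations that in each single update operation the increment to the primal objective is at most $3$, while the increment to the dual objective is exactly~$1$. Since both objectives start at $0$ (the only nonzero initializations are $y_f \gets 1$ for facilities with $\cost(f)=0$ and $x_{c,0} \gets 1$, neither of which contributes to the objective), summing over all update operations executed up to step $k$ yields $\primal_k \le 3 \cdot \dual_k$.

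For the dual side, the analysis is immediate: within one update operation \FRAC only increments $\gamma_c$ by $1$ and then, for each $t \in T$, exactly one of $\alpha_{c,t}, \beta_{c,t}$ by $1$. Since the dual objective is $\sum_{c} \gamma_c$, each update operation raises it by exactly $1$. For the primal side, I split the growth into a \emph{connection} part and a \emph{facility} part. The connection part is $t^* \cdot (1/t^*) = 1$ when an active distance $t^*$ exists, and $0$ otherwise, contributing at most $1$.

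The key step is to bound the facility part. The augmentation of $y_f$ changes the primal cost by $\cost(f)\cdot \Delta y_f = y_f^{\mathrm{old}} + 1/|F|$, so the total growth coming from Step~4 is
\[
  \sum_{f \in \biguplus_{t \in T_c^1} F_{c,t}} \bigl(y_f^{\mathrm{old}} + 1/|F|\bigr)
  \;=\; \sum_{t \in T_c^1} y(F_{c,t}) \;+\; \frac{|\biguplus_{t \in T_c^1} F_{c,t}|}{|F|}.
\]
The second term is at most $1$ trivially. For the first, I use that the update operation is executed only because the serving constraint for $c$ is violated \emph{before} the augmentation: combining $x_{c,t}=1$ for $t \in T_c^1$ with \eqref{eq:lp_non_linear} gives
\[
  \sum_{t \in T_c^1} \min\bigl\{1,\; y(F_{c,t})\bigr\} \;\le\; \sum_{t \in T} \min\bigl\{x_{c,t},\; y(F_{c,t})\bigr\} \;<\; 1,
\]
which forces $y(F_{c,t}) < 1$ for each $t \in T_c^1$ individually, so the mins are just $y(F_{c,t})$ and $\sum_{t \in T_c^1} y(F_{c,t}) < 1$. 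Hence the facility part grows by at most $2$, and the total primal growth per update is at most $1 + 2 = 3$, matching three times the dual growth.

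The only subtlety, and the step I expect to need to state carefully, is that ``$y_f^{\mathrm{old}}$'' in the primal-growth expression refers to the value \emph{immediately before} the augmentation that occurs in this very update operation, which is exactly the value constrained by the violated serving inequality above; the bound would fail if we tried to read off the post-augmentation value. Everything else is a routine computation of $\cost(f)\cdot \Delta y_f$ and a telescoping induction over update operations.
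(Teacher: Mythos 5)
Your proposal is correct and follows essentially the same route as the paper: both bound the per-update dual growth by exactly $1$ (only $\gamma_c$ is incremented) and the per-update primal growth by $3$, splitting it into the connection term ($t^* \cdot 1/t^* \le 1$) and the facility term $y(F_1) + |F_1|/|F| \le 2$, where $y(F_1) < 1$ is deduced from the serving constraint being violated before the augmentation. Your explicit remark that the violated constraint forces each $\min\{1, y(F_{c,t})\}$ to equal $y(F_{c,t})$ just spells out a step the paper states tersely; otherwise the arguments coincide.
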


\begin{proof}
As the values of both solutions are initially zero, it suffices to analyze the
growth of the primal and dual objectives for a single update operation. The
value of the dual solution grows by~$1$ as $\gamma_c$ is incremented only for
the requested client $c$. Thus, it is sufficient to show that the primal
solution increases at most by $3$.

By $y_f$, $x_{c,t}$ and $T_c^1$, we understand the values of these variables
before an~update operation. Let $F_1 = \biguplus_{t \in T_c^1} \F{c,t}$. As the
serving constraint for client $c$ is not satisfied at that point,
\begin{equation}
\label{eq:y_f_before_update}
    1 > \sum_{t \in T} \min \left\{ x_{c,t},\, y(\F{c,t}) \right\} 
    \geq \sum_{t \in T_c^1} \min \left\{ x_{c,t},\, y(\F{c,t}) \right\} 
    \geq \sum_{t \in T_c^1} y(\F{c,t})
    = y(F_1) .
\end{equation}
In the last inequality we used that (by \autoref{lem:structural}), $T_c^1 =
\{ t \in T : x_{c,t} = 1 \}$. The last equality follows as sets $\F{c,t}$ are
disjoint for different $t$.

Within a single update operation, let $\Delta x_{c,t}$ and $\Delta y_f$ be the
increases of variables~$x_{c,t}$ and $y_f$, respectively. By
\autoref{lem:structural}, \FRAC increases one connection variable $x_{c,t^*}$
for an~active distance $t^*$ (and no connection variable if there is no active
distance) and performs augmentations of $y_f$ for all $f \in F_1$. The increase
of the primal value is then
\begin{align*}
\Delta \primal 
& = \sum_{t \in T} t \cdot \Delta x_{c,t} 
    + \sum_{f \in F_1} \cost(f) \cdot \Delta y_f 
 \leq 1 + \sum_{f \in F_1} \cost(f) 
    \cdot \left ( \frac{y_f}{\cost(f)} + \frac{1}{|F| \cdot \cost(f)}\right ) \\ 
& = 1 + y(F_1) + \frac{|F_1|}{|F|} < 3 ,
\end{align*}
where the last inequality follows by \eqref{eq:y_f_before_update}.
\end{proof}

\newcommand{\fracperformancetext}{%
For any input $(G = (F,C,E,\cost),A)$, it holds that $\FRAC(G,A) \leq O(\log
|F|) \cdot \OPT(G,A)$.}

\begin{lemma}
\label{lem:frac}
\fracperformancetext
\end{lemma}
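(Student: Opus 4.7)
The plan is to assemble the claim from the three quantitative ingredients already established: the primal/dual cost relation (Lemma \ref{lem:primal_dual_relation}), the bounded infeasibility of the dual (Lemma \ref{lem:dual_almost_feasible}), and the fact that the LP $\primal_k$ is a valid relaxation of the integer problem. The overall argument is a standard primal-dual wrap-up.

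First, I would observe that $\FRAC(G,A)$ equals the objective value of the primal solution that \FRAC constructs for $\primal_k$, where $k = |A|$. (Feasibility of this primal solution was shown in Lemma \ref{lem:primal_is_feasible}.) By Lemma \ref{lem:primal_dual_relation}, this primal value is at most $3$ times the value $D$ of the dual solution simultaneously produced by \FRAC. So the task reduces to bounding $D$ in terms of $\OPT(G,A)$.

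Next, by Lemma \ref{lem:dual_almost_feasible}, every constraint of $\dual_k$ is violated at most by some factor $\lambda = O(\log |F|)$. Thus scaling down every dual variable $\gamma_c$, $\alpha_{c,t}$, $\beta_{c,t}$ by $\lambda$ yields a genuinely feasible solution to $\dual_k$ whose objective value equals $D/\lambda$. By weak LP duality, $D/\lambda$ is at most the optimal value of $\primal_k$, which in turn is at most $\OPT(G,A)$ because $\primal_k$ is a relaxation of the integer FL instance (as explicitly noted in the text following \eqref{eq:lp_non_linear}: any integral feasible solution gives a primal solution of the same cost by setting the corresponding $y_f$'s and $x_{c,\tau}$'s to $1$). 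Hence $D \leq \lambda \cdot \OPT(G,A) = O(\log |F|) \cdot \OPT(G,A)$.

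Combining the two steps gives $\FRAC(G,A) \leq 3 D \leq O(\log |F|) \cdot \OPT(G,A)$, which is the claim. There is no real obstacle here; all of the genuine technical work (the primal-dual potential argument in Lemma \ref{lem:primal_dual_relation} and the multiplicative-weights bound in Lemmas \ref{lem:log_augmentations}--\ref{lem:dual_almost_feasible}) has already been done, so the proof is essentially a three-line application of weak duality plus scaling.
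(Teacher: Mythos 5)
Your proof is correct and follows essentially the same route as the paper: bound $\FRAC(G,A)$ by $3$ times the dual value via \autoref{lem:primal_dual_relation}, then scale the dual by the $O(\log |F|)$ violation factor from \autoref{lem:dual_almost_feasible} and apply weak duality against the LP relaxation. Your write-up merely spells out the scaling and relaxation steps that the paper's one-line chain of inequalities leaves implicit.
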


\begin{proof}
Let $k$ be the total number of active clients in $A$, and let
$\mathsf{val}(\primal_k)$ and $\mathsf{val}(\dual_k)$ be the values of the final
primal and dual solutions generated by \FRAC. Then,
\begin{align*}
\FRAC(G,A) = \mathsf{val}(\primal_k) 
    & \leq 3 \cdot \mathsf{val}(\dual_k)
        && \text{(by \autoref{lem:primal_dual_relation})} \\
    & \leq O(\log |F|) \cdot \OPT(G,A)
        && \text{(by \autoref{lem:dual_almost_feasible} and weak duality).}
\qedhere
\end{align*}
\end{proof}

\section{Deterministic Rounding}
\label{sec:rounding}

Now we define our deterministic algorithm \INT, which rounds the fractional
solution computed by $\FRAC$. For a client $c \in A$, \INT observes the actions
of $\FRAC$ while processing $c$ and on this basis makes its own decisions.
First, \INT processes augmentations of variables $y_f$ performed by \FRAC, and
purchases some facilities. Once $\FRAC$ finishes handling client $c$, \INT
connects~$c$ to the closest open facility. (We show below that such facility
exists.)


\subsection{Purchasing Facilities: Properties of INTFAC}

Purchasing facilities by \INT is based solely on graph $G$ and on updates of
variables $y_f$ produced by \FRAC. In particular, it neglects whether a given
client is active or not. We use integral variables $\hy_f \in \{0,1\}$ to denote
whether \INT opened facility $f$. Furthermore, for any set $F'$ we use $\hy(F')$
as a shorthand for $\sum_{f \in F'} \hy_f$. 

The following lemma is an adaptation of the deterministic rounding routine for
the set cover problem by Alon et al.~\cite{AlAABN09} and its proof is postponed to 
\autoref{sec:intfac}.

\begin{lemma}
\label{lem:facility_rounding}
Fix any input $(G = (F,C,E,\cost),A)$. Initially, $\hy_f = y_f = 0$ for any $f
\in F$. There exists a~deterministic polynomial-time online algorithm \INTFAC
that transforms increments of fractional variables~$y_f$ to increments of
integral variables $\hy_f \in \{0,1\}$, so that
\begin{itemize}
\item condition $y(S_{c,t}) \geq 1/2$ implies $\hy(S_{c,t}) \geq 1$ for any
client $c \in C$ (active or inactive) and any $t \in T$, 
\item $\sum_{f \in F} \cost(f) \cdot \hy_f \leq O(\log |C \times T|) \cdot
		\sum_{f \in F} \cost(f) \cdot y_f + 2 \cdot \max_{f \in F} \cost(f)$.
\end{itemize}
\end{lemma}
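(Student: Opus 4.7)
The plan is to recast the rounding task as a deterministic online set-cover problem with monotone fractional input, then adapt the potential-based rounding of Alon et al.~\cite{AlAABN09}. I would take as ``elements'' all pairs $(c,t) \in C \times T$ and as ``sets'' the facilities $f \in F$, with the convention that $f$ covers $(c,t)$ iff $f \in S_{c,t}$. Under this identification the scaled vector $2y$ is a feasible fractional set cover of every element $(c,t)$ whose hypothesis $y(S_{c,t}) \ge 1/2$ is triggered, and the implication ``$y(S_{c,t}) \ge 1/2 \Rightarrow \hy(S_{c,t}) \ge 1$'' demanded by the lemma becomes the integral coverage guarantee expected from a deterministic online set-cover rounder with $m := |C \times T|$ elements.

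Concretely, $\INTFAC$ would maintain a weight $w_{c,t} \ge 0$ per pair, initialized so that the potential $\Phi := \sum_{(c,t):\, \hy(S_{c,t})=0} w_{c,t}$ starts at a fixed constant, and would process each increment of some $y_f$ by $\delta$ in two substeps: (i) multiplicatively increase $w_{c,t}$ by a factor of the form $(1 + \delta \cdot \Theta(\log m))$ for every still-uncovered pair $(c,t)$ with $f \in S_{c,t}$; (ii) while some facility $f'$ with $\hy_{f'}=0$ satisfies $\Phi_{f'} := \sum_{(c,t):\, f' \in S_{c,t},\, \hy(S_{c,t})=0} w_{c,t} \ge \cost(f')$, open it and remove all newly covered pairs from $\Phi$. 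The loop in (ii) terminates in polynomial time because purchases are monotone, and across the whole execution at most $|F|$ facilities are opened.

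The analysis would split into coverage and cost. Coverage follows by a pessimistic-estimator argument: parameters are calibrated so that $\Phi$ stays below a fixed constant at all times, and the exponential growth rate of $w_{c,t}$ then forces some facility $f' \in S_{c,t}$ to attain $\Phi_{f'} \ge \cost(f')$ before $w_{c,t}$ alone could lift $\Phi$ past its permitted value --- so $(c,t)$ must have been integrally covered in the interim. For the cost bound, each purchase of $f'$ removes mass at least $\cost(f')$ from $\Phi$, while the mass injected by a single increment of $y_f$ by $\delta$ is at most $\Theta(\log m) \cdot \Phi_f \cdot \delta \le \Theta(\log m) \cdot \cost(f) \cdot \delta$ (using $\Phi_f < \cost(f)$ as long as $f$ is unpurchased); telescoping yields $\sum_f \cost(f) \hy_f \le O(\log |C\times T|) \cdot \sum_f \cost(f) y_f + O(1)$, with the $2 \max_f \cost(f)$ slack in the lemma absorbing the overshoot that can arise when a discrete update pushes $\Phi_{f'}$ past $\cost(f')$ by more than the anticipated amount, or when coverage forces the purchase of a single costly facility whose contribution to $\Phi$ had not yet accumulated fully.

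The main obstacle will be the joint calibration of the multiplicative rate, the initial weights and the purchase threshold so that the coverage and cost invariants both hold deterministically and online --- precisely the balancing that Alon et al.'s scheme delivers in the classical online set-cover setting. Two adaptations to our situation are worth noting: all ``elements'' $(c,t)$ are known up front so the online-reveal aspect of set cover does not even arise; and coverage must be maintained for \emph{every} client $c \in C$ (active or not), which comes for free because step (i) treats pairs belonging to inactive and active clients identically.
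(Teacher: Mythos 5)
Your scheme differs structurally from the paper's (which derandomizes the threshold rounding via the probabilistic method on a two-part potential), and as written it has a genuine calibration gap on the coverage side. You initialize the weights $w_{c,t}$ so that $\Phi=\sum_{(c,t):\,\hy(S_{c,t})=0}w_{c,t}$ is a fixed constant, but your purchase rule triggers only when $\Phi_{f'}\geq\cost(f')$, a threshold measured in cost units that can be arbitrarily larger than any constant. Consider a pair $(c,t)$ all of whose facilities have cost $\maxcost=\max_f\cost(f)\gg\poly(|C\times T|)$: even after $y(S_{c,t})$ reaches $1/2$, the weight $w_{c,t}$ has only grown to roughly $w_{c,t}(0)\cdot\exp\bigl(\Theta(\log m)\cdot y(S_{c,t})\bigr)=\poly(m)$, which never reaches $\cost(f')$, so no purchase is triggered and the guarantee ``$y(S_{c,t})\geq 1/2\Rightarrow\hy(S_{c,t})\geq 1$'' fails outright. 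Relatedly, the invariant you lean on for coverage --- ``$\Phi$ stays below a fixed constant at all times'' --- is asserted but has no supporting mechanism: the only way your algorithm removes weight is by purchasing, and purchases are gated by the (possibly huge) cost thresholds, so nothing caps $\Phi$. Conversely, if you lower the purchase threshold to a constant so that coverage holds, the charging argument for the cost bound breaks, because removing $\Omega(1)$ weight no longer pays for a facility of cost $\cost(f')$. In short, a single potential in which dimensionless element weights are compared against raw facility costs cannot serve both invariants; the weight scale must be tied to the cost scale, or coverage and cost must be tracked separately. This is exactly what the paper does: $\Phi_1=\sum_{(c,t):\,\hy(S_{c,t})=0}\ell^{4y(S_{c,t})}$ is dimensionless and enforces coverage once it is capped, while $\Phi_2=\ell\cdot\exp\bigl(\sum_f\tfrac{\cost(f)}{2\maxcost}(\hy_f-b\,y_f)\bigr)$ normalizes costs by $2\maxcost$ and yields the cost bound (this normalization, not overshoot in a charging step, is where the additive $2\max_f\cost(f)$ term comes from); the buy/no-buy decision at each augmentation is then made by the probabilistic method so that $\Phi_1+\Phi_2$ never increases.

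Two further technical points you would have to address even after recalibrating (e.g., by giving the initial weights cost units of order $\maxcost/\poly(m)$, which can in fact be pushed through): the increments $\delta$ produced by \FRAC are not infinitesimal (for $\cost(f)=1$ they can be of order $1$), so your multiplicative factor $(1+\delta\cdot\Theta(\log m))$ is \emph{not} interchangeable with $\exp(\delta\cdot\Theta(\log m))$ --- the former is what your cost charge $\Theta(\log m)\cdot\Phi_f\cdot\delta$ supports, the latter is what your coverage argument needs, and for large $\delta\log m$ they differ polynomially; you would need to split increments or argue both bounds for the same update rule. Also note that in your step (i) you must multiply $w_{c,t}$ only for pairs with $f\in S_{c,t}$, i.e., with $\cost(f,c)\leq t$, matching the prefix structure $S_{c,t}=\biguplus_{t'\leq t}F_{c,t'}$ used in \autoref{lem:good_distance}; your write-up does this correctly, so the issue is solely the weight/cost calibration and the non-infinitesimal updates.
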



\subsection{Connecting Clients}

Once \INT purchases facilities using deterministic routine \INTFAC
(cf.~\autoref{lem:facility_rounding}), it connects client $c$ to the closest
open facility. Now we show that such a facility indeed exists and we bound the
competitive ratio of \INT.

\begin{lemma}
\label{lem:det_connection}
On any input $(G,A)$, the solution generated by \INT is feasible and 
the total cost of connecting clients by \INT is at most $2 \cdot \FRAC(G,A)$.
\end{lemma}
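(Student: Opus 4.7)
The plan is to combine the ``good distance'' guarantee of \autoref{lem:good_distance} with the rounding guarantee of \autoref{lem:facility_rounding} on a per-client basis, and then sum over clients.

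\medskip

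Fix an active client $c \in A$ and consider the moment right after \FRAC finishes serving $c$ (so \INTFAC has also processed all the associated increments of the variables $y_f$). First, I would apply \autoref{lem:good_distance} to obtain a distance $\tau_c \in T$ such that $y(S_{c,\tau_c}) \geq 1/2$ and $\sum_{t \in T} t \cdot x_{c,t} \geq \tau_c / 2$. Feeding the first inequality into \autoref{lem:facility_rounding} gives $\hy(S_{c,\tau_c}) \geq 1$, so \INTFAC has opened at least one facility $f^* \in S_{c,\tau_c}$. By the definition of $S_{c,\tau_c} = \biguplus_{t' \leq \tau_c} F_{c,t'}$, the facility $f^*$ lies in some cluster $F_{c,t'}$ with $t' \leq \tau_c$, i.e. $\cost(c, f^*) \leq \tau_c$. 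Hence an open facility reachable from $c$ exists at the moment \INT has to connect $c$, which proves feasibility, and the connection chosen by \INT (the closest open facility) costs at most $\tau_c$.

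\medskip

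Then I would sum the per-client bound. Combining $\cost(c, \cdot) \leq \tau_c \leq 2 \sum_{t \in T} t \cdot x_{c,t}$ over all $c \in A$ yields
\begin{align*}
\sum_{c \in A} \cost(c, f_c^{\INT})
\;\leq\; 2 \sum_{c \in A} \sum_{t \in T} t \cdot x_{c,t}
\;\leq\; 2 \cdot \FRAC(G,A),
\end{align*}
where $f_c^{\INT}$ denotes the facility to which \INT connects $c$, and the last step uses that the fractional connection term is one of the two non-negative components of the primal objective whose final value equals $\FRAC(G,A)$.

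\medskip

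I do not anticipate a real technical obstacle: the whole argument is essentially a bookkeeping step that is set up precisely by \autoref{lem:good_distance} (which was designed so that the fractional investment in connecting $c$ is proportional to the distance $\tau_c$ at which half a facility is fractionally open) and \autoref{lem:facility_rounding} (which preserves this half-open guarantee under rounding, collection by collection). The only subtlety to state explicitly is the timing: since \INT forwards increments of $y_f$ to \INTFAC as \FRAC produces them, the hypothesis $y(S_{c,\tau_c}) \geq 1/2$ of \autoref{lem:facility_rounding} is indeed satisfied at the instant \INT needs to pick a facility for $c$, so \INTFAC has already rounded enough to ensure $\hy(S_{c,\tau_c}) \geq 1$.
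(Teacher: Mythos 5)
Your proposal is correct and follows essentially the same argument as the paper: apply \autoref{lem:good_distance} to obtain the distance $\tau$, use \autoref{lem:facility_rounding} to conclude an open facility exists within distance $\tau$, bound the per-client connection cost by $2\sum_{t} t\cdot x_{c,t}$, and sum over clients against the connection part of the primal objective. The explicit remarks about timing and the final summation are just spelled-out versions of steps the paper handles implicitly.
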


\begin{proof}
Fix any client $c \in A$. By \autoref{lem:good_distance}, there exists
a distance $\tau \in T$ such that $y(S_{c,\tau}) \geq 1/2$ and $\sum_{t \in T} t
\cdot x_{c,t}  \geq \tau/2$. By \autoref{lem:facility_rounding}, once \INT
purchases facilities, it holds that $\hy(S_{c,\tau}) \geq 1$. It means that at
least one facility is opened in set $S_{c,\tau}$, i.e., at distance at most
$\tau$ from $c$. 

Therefore, \INT is feasible and by connecting client $c$ to the closest open facility,
it ensures that the connection cost is at most $\tau \leq 2 \cdot \sum_{t \in T}
t \cdot x_{c,t}$. The proof is concluded by observing that $\sum_{t \in T} t
\cdot x_{c,t}$ is the connection cost of \FRAC that can be attributed solely to
the connection of client $c$.
\end{proof}

\begin{lemma}
\label{lem:int_ratio}
For any input $(G = (F,C,E,\cost),A)$, it holds that  
$\INT(G,A) \leq q \cdot \log |F| \cdot (\log |C| + \log \log \Delta_G) \cdot
\OPT(G,A) + 2 \cdot \max_{f \in F} \cost(f)$, where $q$ is a universal constant
not depending on $G$ or $A$. Furthermore, \INT runs in time polynomial in $|G|$,
$|A|$, $\max_{e \in E} \cost(e)$, and $\max_{f \in F} \cost(f)$.
\end{lemma}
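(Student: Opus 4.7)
The plan is to combine the three main results established so far: Lemma~\ref{lem:frac} (which bounds the cost of \FRAC against \OPT), Lemma~\ref{lem:facility_rounding} (which bounds the integral facility cost in terms of the fractional one), and Lemma~\ref{lem:det_connection} (which bounds the integral connection cost in terms of \FRAC's cost). The total cost $\INT(G,A)$ splits naturally into the sum of facility-opening cost $\sum_{f} \cost(f) \cdot \hy_f$ and connection cost, so I would bound these two contributions separately and then combine.

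First I would observe that the fractional facility cost $\sum_{f} \cost(f) \cdot y_f$ is at most $\FRAC(G,A)$ by definition of the objective of $\primal_k$, and likewise the fractional connection cost $\sum_c \sum_t t \cdot x_{c,t}$ is at most $\FRAC(G,A)$. Then, applying Lemma~\ref{lem:facility_rounding} and the bound $|T| \leq 2 + \log \Delta_G$ from the preliminaries, the facility-opening cost of \INT is at most
\[
O(\log |C \times T|) \cdot \FRAC(G,A) + 2 \cdot \max_{f \in F} \cost(f)
= O(\log |C| + \log \log \Delta_G) \cdot \FRAC(G,A) + 2 \cdot \max_{f \in F} \cost(f).
\]
By Lemma~\ref{lem:det_connection}, the connection cost of \INT is at most $2 \cdot \FRAC(G,A)$. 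Summing these two contributions and absorbing the constant factor gives a total bound of $O(\log |C| + \log \log \Delta_G) \cdot \FRAC(G,A) + 2 \cdot \max_{f \in F} \cost(f)$.

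Next I would substitute in Lemma~\ref{lem:frac}, which yields $\FRAC(G,A) \leq O(\log |F|) \cdot \OPT(G,A)$. Composing the two bounds gives
\[
\INT(G,A) \leq O\bigl(\log |F| \cdot (\log |C| + \log \log \Delta_G)\bigr) \cdot \OPT(G,A) + 2 \cdot \max_{f \in F} \cost(f),
\]
which is the stated inequality with a suitable universal constant $q$.

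For the running-time bound, I would note that \FRAC runs in the claimed polynomial time by Lemma~\ref{lem:primal_is_feasible}; \INTFAC is a deterministic polynomial-time online routine by Lemma~\ref{lem:facility_rounding}; and the only additional work by \INT per client is scanning the (at most $|F|$) facilities to find the closest open one, which is clearly polynomial. No step here presents a real obstacle: the proof is essentially a bookkeeping composition, and the only thing to be careful about is keeping the additive $2 \cdot \max_{f} \cost(f)$ term (which carries through from \INTFAC) cleanly separated from the multiplicative bound, rather than trying to absorb it into the competitive ratio.
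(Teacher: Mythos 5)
Your proof is correct and follows essentially the same route as the paper: split $\INT$'s cost into the rounded facility cost plus connection cost, bound the former via \autoref{lem:facility_rounding} and the latter via \autoref{lem:det_connection}, then apply \autoref{lem:frac} together with $|T| \leq 2 + \log \Delta_G$, and handle the running time via \autoref{lem:primal_is_feasible} plus the polynomial overhead of \INTFAC and the closest-open-facility search. No gaps.
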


\begin{proof}
Let $\maxcost = \max_{f \in F} \cost(f)$. Then,
\begin{align*}
\INT(G,A) 
	& \leq \sum_{f \in F} \cost(f) \cdot \hy_f + 2 \cdot \FRAC(G,A) 
		&& \text{(by \autoref{lem:det_connection})} \\
	& \leq O(\log |C \times T|) \cdot \FRAC(G,A) + 2 \cdot \maxcost
		&& \text{(by \autoref{lem:facility_rounding})} \\
	& = O((\log |C| + \log |T|) \cdot \log |F|) \cdot \OPT(G,A) + 2 \cdot \maxcost
		&& \text{(by \autoref{lem:frac}).}
\end{align*}
The bound on the cost of \INT is concluded by using $|T| \leq 2 + \log
\Delta_G$.

By \autoref{lem:primal_is_feasible}, \FRAC running time is  
$\poly(|G|, |A|, \max_{e \in E} \cost(e), \max_{f \in F} \cost(f))$. On top of
that, \INT adds its own computations (in particular the rounding scheme of
\INTFAC), whose runtime is polynomial in $|G|$ and $|A|$. This implies the
second part of the lemma (the running time of~\INT).
\end{proof}


\subsection{Purchasing Facilities: Algorithm INTFAC}
\label{sec:intfac}

We start with a technical claim and later we define our rounding procedure \INTFAC.

\begin{lemma}
\label{lem:technical}
Fix any $q \in [0,1/2]$ and any $r \geq 0$. Let $X$ be a binary variable being
$0$ with probability $p > 0$. Then, $\E[\exp(q \cdot X)] \leq \exp(- (3/2) \cdot
q \cdot \ln p)$.
\end{lemma}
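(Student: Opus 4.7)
The plan is to chain two elementary inequalities that meet at the common linear bridge $1 + (3/2)\, q (1-p)$. Since $X$ is binary, the exponential collapses: $\E[\exp(qX)] = p + (1-p) e^q = 1 + (1-p)(e^q - 1)$, so the task reduces to sandwiching this expression between the stated left-hand side and right-hand side.

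For the first leg, I would bound $e^q \leq 1 + (3/2)\, q$ on $q \in [0, 1/2]$ by a routine one-variable check: the function $h(q) = 1 + (3/2)\, q - e^q$ satisfies $h(0) = 0$, has its unique critical point on the interval at $q = \ln(3/2) \approx 0.405$, and is positive at the right endpoint since $e^{1/2} < 1.65 < 1.75$. Substituting this bound yields $\E[\exp(qX)] \leq 1 + (3/2)\, q (1-p)$.

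For the second leg, I would invoke convexity of $\varphi(p) = p^{-a}$ with $a = (3/2)\, q \geq 0$. Since $\varphi''(p) = a(a+1)\, p^{-a-2} \geq 0$ on $(0, \infty)$, the graph of $\varphi$ lies above its tangent at $p = 1$, and that tangent is exactly $1 + a(1 - p)$. Therefore $\exp(-(3/2)\, q \ln p) = p^{-3q/2} \geq 1 + (3/2)\, q (1-p)$, and chaining with the previous estimate gives the claim.

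The main (and essentially only) obstacle is choosing a linear intermediate that is compatible with both legs: any slope at least $2(e^{1/2} - 1) \approx 1.297$ suffices to dominate $e^q - 1$ on $[0, 1/2]$, while the tangent-line lower bound on $p^{-a}$ is available for every non-negative slope, so the value $3/2$ is a clean common choice that also matches the constant appearing elsewhere in the paper. The parameter $r$ in the hypothesis does not appear in the conclusion and is not used in the argument; it presumably plays a role at the point where this technical lemma is invoked within the analysis of \INTFAC.
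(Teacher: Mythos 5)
Your proof is correct and follows essentially the same route as the paper: both hinge on the elementary bound $e^q - 1 \le (3/2)\,q$ for $q \in [0,1/2]$, and differ only in that the paper linearizes $p = e^{\ln p}$ via $e^x \ge 1+x$ at the start and re-exponentiates with $1+z \le e^z$ at the end, whereas you package those two steps into the single tangent-line (convexity) bound $p^{-a} \ge 1 + a(1-p)$. You are also right that the parameter $r$ in the statement is vestigial; it is unused in the paper's proof as well.
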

	
\begin{proof}
Using the definition of $X$, we have
\begin{align*}
		\E[\exp(q \cdot X)] 
		& = p \cdot \e^0 + (1 - p) \cdot \e^q 
		 = \exp(\ln p) + (1 - \exp(\ln p)) \cdot \e^q \\
		& \leq 1 + \ln p - e^q \cdot \ln p 
		 = 1 - \ln p \cdot (\e^q - 1) \\
		& \leq 1 - (3/2) \cdot q \cdot \ln p  \\
		& \leq \exp(-(3/2) \cdot q \cdot \ln p).
\end{align*}
In the first inequality, we used that $e^{x} \cdot 1 + (1-e^{x}) \cdot z \leq
(1+x) \cdot 1 + (-x) \cdot z$ for any $x \leq 0$ and $z \geq 1$ and in the
second one, we used that $e^{x} - 1 \leq 3x/2$ for any $x \in [0,1/2]$. 
\end{proof}


\subparagraph*{Algorithm Description.}

As we mentioned earlier, our routine \INTFAC for rounding facilities is an
adaptation of the deterministic rounding procedure for the set cover problem by
Alon et al.~\cite{AlAABN09}. On the basis of the facility-client graph $G$, we
define the set $C \times T$ of \emph{elements}. Intuitively, our solution \FRAC
``covers'' an element $(c,t) \in C \times T$ by fractionally opening facilities
from $S_{c,t}$. The routine \INTFAC deterministically rounds these covering
choices.

Let $\ell = |C \times T|$, 
$\maxcost = \max_{f \in F} \cost(F)$ and $b = 6 \cdot \ln \ell = O(\log |C \times T|)$.
We consider the potential function $\Phi = \Phi_1 + \Phi_2$, where 
\begin{align*}
	\Phi_1 
		= \sum_{(c,t) \,:\, \hy(S_{c,t}) = 0} \ell^{\,4 \cdot y(S_{c,t})}
	&& \text{and} &&
	\Phi_2
		= \ell \cdot \exp\left( 
				\sum_{f\in F} \frac{\cost(f)}{2 \maxcost} \cdot \left(\hy_f - b \cdot y_f\right)
			\right).
\end{align*}

Assume that \FRAC augmented variable $y_f$. Then our algorithm \INTFAC chooses
whether to set $\hy_f$ to $1$ or not (purchase~$f$ or not), so that the
potential $\Phi$ does not increase. (We again emphasize that this choice
neglects the current set of active clients.)


\subparagraph*{Correctness and Performance.}

In the lemma below, we show that \INTFAC is well defined, i.e., it is possible
to fix variable $\hy_f$, so that the potential $\Phi$ does not increase. This
implies that both $\Phi_1$ and $\Phi_2$ remain upper-bounded, which can be in turn
used to show properties of \autoref{lem:facility_rounding}.

\begin{lemma}
\label{lem:potential_drops}
Assume $y_{f^*}$ is increased by $\delta$. If $\hy_{f^*} = 1$, then $\Phi$ does
not increase. Otherwise, there is a~choice to either set $\hy_{f^*}$ to $1$ or
not, so that $\Phi$ does not increase.
\end{lemma}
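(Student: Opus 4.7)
The plan is to handle the two cases in the statement separately. For Case 1 ($\hy_{f^*}=1$), I would argue directly: any pair $(c,t)$ with $f^* \in S_{c,t}$ satisfies $\hy(S_{c,t}) \geq \hy_{f^*} = 1$, so it contributes nothing to $\Phi_1$; hence the growth of $y(S_{c,t})$ is immaterial and $\Phi_1$ is unchanged. The exponent inside $\Phi_2$ changes only through the $-b\, y_{f^*}$ term, by $-b\delta\cdot \cost(f^*)/(2\maxcost) \leq 0$, so $\Phi_2$ does not increase either.

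For Case 2 ($\hy_{f^*}=0$), I would invoke the method of conditional expectations. Consider the hypothetical experiment in which the new value of $\hy_{f^*}$ is set to $0$ with probability $p := \ell^{-4\delta}$ and to $1$ with probability $1-p$. If I can show $\E[\Phi^{\mathrm{new}}] \leq \Phi$, then at least one of the two deterministic outcomes satisfies $\Phi^{\mathrm{new}} \leq \Phi$, and \INTFAC picks it. I would bound $\E[\Phi_1^{\mathrm{new}}]$ and $\E[\Phi_2^{\mathrm{new}}]$ separately.

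For $\Phi_1^{\mathrm{new}}$: only summands $(c,t)$ with $f^* \in S_{c,t}$ and currently $\hy(S_{c,t})=0$ can change. With probability $1-p$ such a term drops out (since now $\hy(S_{c,t}) \geq 1$); with probability $p$ it stays and $y(S_{c,t})$ rises by $\delta$. Its expected new contribution is $p \cdot \ell^{4(y(S_{c,t})+\delta)} = \ell^{4 y(S_{c,t})}$ by the choice of $p$, so each affected term contributes zero in expectation. For $\Phi_2^{\mathrm{new}}$, set $q = \cost(f^*)/(2\maxcost) \in [0,1/2]$ and let $X$ denote the new value of $\hy_{f^*}$. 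Then $\Phi_2^{\mathrm{new}} = \Phi_2 \cdot \exp(q(X - b\delta))$, and applying \autoref{lem:technical} gives $\E[\exp(qX)] \leq \exp(-(3/2)q\ln p)$. With $b = 6\ln \ell$ and $p = \ell^{-4\delta}$, one checks $-b\delta - (3/2)\ln p = 0$, so $\E[\Phi_2^{\mathrm{new}}] \leq \Phi_2$.

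The main subtlety, which I would highlight at the end, is the simultaneous calibration: $\Phi_1$ requires $p \leq \ell^{-4\delta}$ (purchase often enough to kill elements from the sum), while $\Phi_2$ requires $p \geq \ell^{-4\delta}$ (purchase rarely enough that $\hy_f$ stays dominated by $b\,y_f$). The constants $4$ in the exponent of $\Phi_1$ and $b = 6\ln \ell$ in $\Phi_2$ are designed so that these two thresholds meet exactly at $p = \ell^{-4\delta}$, with the factor $3/2$ in \autoref{lem:technical} bridging the gap between $\e^q - 1$ and $q$.
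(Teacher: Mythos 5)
Your proposal is correct and follows essentially the same route as the paper: the same direct argument when $\hy_{f^*}=1$, and for $\hy_{f^*}=0$ the same probabilistic-method step with purchase probability $1-\ell^{-4\delta}$, keeping $\Phi_1$ unchanged in expectation and bounding $\Phi_2$ via \autoref{lem:technical} with $q=\cost(f^*)/(2\maxcost)$ and $p=\ell^{-4\delta}$. The calibration check $-b\delta-(3/2)\ln p=0$ matches the paper's computation exactly.
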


\begin{proof}
By $y_f$ and $\hy_f$, we mean the values of these variables before an update
operation of \FRAC.

First, we assume $\hy_{f^*} = 1$. Increasing variable $y_{f^*}$ affects values
of $y(S_{c,t})$ for $f^* \in S_{c,t}$: all such $y(S_{c,t})$ increase by
$\delta$. However, for any element $(c,t)$, such that $f^* \in S_{c,t}$, it
holds that $\hy(S_{c,t}) \geq \hy_{f*} = 1$, i.e., element $(c,t)$ is not
counted in the sum occurring in $\Phi_1$. Thus, increasing variable
$y_{f^*}$ does not affect $\Phi_1$. Furthermore, increasing $y_{f^*}$ and
keeping $\hy_{f^*}$ unchanged can only decrease $\Phi_2$. Thus, $\Phi = \Phi_1 +
\Phi_2$ does not increase when  $\hy_{f^*} = 1$. 

Second, we consider the case $\hy_{f^*} = 0$. To show that either setting
$\hy_{f^*}$ to $1$ or leaving it at $0$ does not increase the potential, we use
the probabilistic method and show that if we pick such action randomly (setting
$\hy_{f^*} = 1$ with probability $1-\ell^{-4 \cdot \delta}$), then, in
expectation, neither $\Phi_1$ nor $\Phi_2$ increases.

\begin{itemize}
\item 
As observed above, only elements $(c,t)$ for which $S_{c,t}$ contain $f^*$ are
affected by the increase of $y_{f^*}$ and possible change of $\hy_{f^*}$. Let $Q
= \{ (c,t) : f^* \in S_{c,t} \text{ and } \hy(S_{c,t}) = 0 \}$ be the set of
such elements contributing to $\Phi_1$. 

Fix any element $(c,t) \in Q$. Its initial contribution towards $\Phi_1$ is
$\ell^{\,4 \cdot y(S_{c,t})}$ and when $y_{f^*}$ increases, the contribution
grows to $\ell^{\,4 \cdot (y(S_{c,t}) + \delta)}$. However, with probability
$1- \ell^{-4 \cdot \delta}$, variable $\hy_{f^*}$ is set to $1$, thus
$\hy(S_{c,t})$ grows from $0$ to $1$, and in effect element $(c,t)$ stops
contributing to $\Phi_1$. Hence, the expected final contribution of element
$(c,t)$ towards $\Phi_1$ is $\ell^{\,4 \cdot (y(S_{c,t}) + \delta)} \cdot
\ell^{-4 \cdot \delta} + 0 \cdot (1 -  \ell^{-4 \cdot \delta}) = \ell^{\,4 \cdot
y(S_{c,t})}$, i.e., is equal to its initial contribution. Therefore, in
expectation, the value of $\Phi_1$ is unchanged. 

\item 
It remains to bound the expected value of $\Phi_2$. 
 Let $\hY$ be the random variable equal to the value of
$\hy_{f^*}$ after the random choice (i.e., $\hY = 1$ with probability $1-\ell^{-4
\cdot \delta}$) and $\Phi'_2$ denote the value of $\Phi_2$ after increasing
$y_{f^*}$ and after the random choice. Using $y_{f^*} = 0$, we obtain
\begin{align*}
	\Phi'_2
	& = \ell \cdot \exp\left(
		\sum_{f\in F} \frac{\cost(f)}{2 \maxcost} \cdot \left(\hy_f - b \cdot y_f\right)
		+ \frac{\cost(f^*)}{2\maxcost} \cdot \hY
		- \frac{b \cdot \cost(f^*)}{2 \maxcost} \cdot \delta
	\right) \\
	& = \Phi_2 \cdot \exp\left(
			\frac{\cost(f^*)}{2\maxcost} \cdot \hY 
		\right) \cdot \exp\left(
		- \frac{b \cdot \cost(f^*)}{2 \maxcost} \cdot \delta
	\right).
\end{align*}
To estimate $\E[\Phi'_2]$, we upper-bound the expected value of expression
$\exp( \hY \cdot \cost(f^*) / (2\maxcost))$, using \autoref{lem:technical}
with $q = \cost(f^*) / (2\maxcost) \leq 1/2$ and $p = \ell^{-4 \cdot \delta}$,
obtaining that 
\[
	\E\left[\exp\left(
			\frac{\cost(f^*)}{2\maxcost} \cdot \hY
		\right)\right]
	\leq \exp\left(- \frac{(3/2) \cdot \cost(f^*)}{2\maxcost} 
		\cdot \ln p \right) 
	= \exp\left(\frac{6 \cdot \ln \ell \cdot \cost(f^*)}{2\maxcost} \cdot \delta 
		\right) .
\]
Therefore, $\E[\Phi'_2] \leq \Phi_2$ and the lemma follows.
\qedhere
\end{itemize}
\end{proof}

\begin{proof}[Proof of \autoref{lem:facility_rounding}]
Initially, all variables $y_f$ and $\hy_f$ are zero, and thus $\Phi =
\sum_{(c,t) \in C \times T} \ell^0 + \ell \cdot \exp(0) = 2 \cdot \ell$. By
\autoref{lem:potential_drops}, the potential never increases. Since $\Phi_2$
is non-negative, any summand of $\Phi_1$ is always at most $2 \cdot \ell \leq
\ell^2$. Therefore, $4 \cdot y(S_{c,t}) \geq 2$ always implies $\hy(S_{c,t}) >
0$, i.e., the first part of the lemma follows. 

To show the second part, we again use that $\Phi = \Phi_1 + \Phi_2 \leq 2 \cdot
\ell$ at any time. As $\Phi_1$ is non-negative, $\Phi_2 \leq 2 \cdot \ell$.
Substituting the definition of $\Phi_2$, dividing by $\ell$, and taking natural
logarithm of both sides yields 
\[
	\frac{1}{2 \maxcost} \cdot \sum_{f\in F}\left(\hy_f \cdot \cost(f) -
		b \cdot y_f \cdot \cost(f) 
	\right) \leq \ln(2) < 1.
\]
Therefore, $
	\sum_{f\in F} \hy_f \cdot \cost(f) \leq 2 \maxcost + 
	b \cdot \sum_{f\in F} y_f \cdot \cost(f)$.
\end{proof}

\section{Handling Large Aspect Ratios}
\label{sec:doubling}

The guarantee of \autoref{lem:int_ratio} has two deficiencies: (i) the bound
on the competitive ratio of \INT depends on the aspect ratio of $G$ and on the
cost of the most expensive facility, (ii) the running time of \INT depends on
the maximal cost in graph $G$ (which can be exponentially large in the input
description). We show how to use cost doubling and edge pruning to handle these
issues, creating our final deterministic solution \DET and proving the main theorem
(restated below).

\maintheorem*

\begin{proof}
Fix facility-client graph $G = (F,C,E,\cost)$ for the non-metric facility
location problem. Recall that we assumed that all non-zero costs and distances
in $G$ are powers of $2$ and are at least~$1$. Let $R = \log |F| \cdot (\log |C|
+ \log \log (|F| \cdot |C|))$.

We now construct a deterministic algorithm \DET which is $O(R)$-competitive on
an~input~$(G,A)$. Let $q$ be the constant from \autoref{lem:int_ratio}. \DET
operates in phases, numbered from~$0$. In phase~$j$, it executes the following
operations.

\begin{enumerate}

\item \DET \emph{pre-purchases} all facilities and edges of $G$ whose cost is
smaller than $2^j / (|F| \cdot |C|)$. 

\item \DET creates an auxiliary facility-client graph $\tG_j$ applying the
following modifications to~$G$.
\begin{itemize}
\item First, \DET creates graph $G_j$ containing only edges and facilities from
$G$ whose individual cost is at most~$2^j$. It also removes connections to
facilities that have been removed in this process.

\item Second, the costs of all facilities and edges that have been pre-purchased
by \DET are set to zero in $G_j$. In a result, $G_j$ is a sub-graph of $G$ with
adjusted distances and costs of facilities, has the same set of clients, its set
of facilities is a subset of $F$, and $\Delta_{G_j} \leq |F| \cdot |C|$. 

\item Third, $\tG_j$ is the modified version of $G_j$, where all costs have been
scaled down, so that the smallest positive cost is equal to $1$. We denote the
scaling factor by $h_j \leq 1$. 
\end{itemize}

\item \DET simulates algorithm $\INT$ on input $(\tG_j,A)$. That is, for a
client $c \in A$, \DET verifies whether the overall cost of $\INT$ (including
serving $c$) remains at most $h_j \cdot (q \cdot R + 2) \cdot 2^j$. In such
case, \DET outputs the choices of \INT for client $c$ as its own. We emphasize
that \INT is run also on clients that have been already served in the previous
phases; in effect, \DET may purchase the same facilities or connections multiple
times. 

\item Eventually, either the sequence $A$ of active clients ends and the total
cost of $\INT$ on $(\tG_j,A)$ is at most $h_j \cdot (q \cdot R + 2) \cdot 2^j$
(in which case \DET terminates as well) or the purchases made by $\INT$, while
handling a client $c \in A$, caused its cost to exceed $h_j \cdot (q \cdot R +
2) \cdot 2^j$. (This includes the special case where $c$ is disconnected from
all facilities in $\tG_j$, because all edges incident to $c$ in $G$ were either
more expensive than $2^j$ or were leading to facilities more expensive than
$2^j$.) In the case of exceeded cost, \DET disregards the decisions of $\INT$
for client $c$, terminates $\INT$, and starts phase $j+1$, processing also all
clients that were already served in phase $j$.
\end{enumerate}

We now analyze the performance of \DET. Let $k = \lceil \log (\OPT(G,A)) \rceil
\geq 0$. We show that \DET terminates latest in phase $k$. Assume that \DET has
not finished within phases $0, 1, \ldots, k-1$. In phase $k$, \DET creates
auxiliary graphs $G_k$ and $\tG_k$, and runs $\INT$ on graph $\tG_k$. Graph
$G_k$ contains all edges of $G$ of cost at most $2^k$; their cost in $G_k$ is
the same or reset to zero. As $\OPT(G,A) \leq 2^k$, $\OPT(G,A)$ purchases only
edges that are in $G_k$, and thus $\OPT(G,A)$ is also a feasible solution to
instance $(G_k,A)$. Thus, $\OPT(G_k,A) \leq \OPT(G,A) \leq 2^k$. As $\tG_k$ is
the scaled-down copy of $G_k$, $\OPT(\tG_k,A) = h_k \cdot \OPT(G_k,A) \leq h_k
\cdot 2^k$.

Let $\tilde{F}_k$ be the set of facilities of graph $\tG_k$ and
$\tilde{\cost}_k(f)$ is the cost of opening facility $f$ in graph $\tG_k$.
Clearly, $|\tilde{F}_k| \leq |F|$ and $\tilde{\cost}_k(f) \leq h_k \cdot
\cost(f)$ for any $f \in F$. By our construction, $\Delta_{\tG_k} = \Delta_{G_k}
\leq |F| \cdot |C|$. Hence, \autoref{lem:int_ratio} implies that
\begin{align*}
    \INT(\tG_k,A) 
    & \leq q \cdot \log |F_k| \cdot \left(\log |C| + \log \log \Delta_{\tG_k}\right) 
        \cdot \OPT(\tG_k,A) + 2 \cdot \max_{f \in \tilde{F}_k} \tilde{\cost}_k(f) \\
    & \leq h_k \cdot q \cdot \log |F| \cdot (\log |C| + \log \log (|F| \cdot |C|)) 
        \cdot 2^k + 2 \cdot h_k \cdot 2^k \\
    & = h_k \cdot (q \cdot R + 2) \cdot 2^k .
\end{align*}
Therefore, \INT is not terminated prematurely within phase $k$ because of high
cost and it finishes the entire sequence $A$. This implies the feasibility of \INT:
it serves all clients latest in phase $k$. 

To bound the total cost of \DET, recall that at the beginning of phase $j$, \DET
purchases at most $|F| \cdot |C|$ edges and at most $|F|$ facilities, each of
cost at most $2^j / (|F| \cdot |C|)$. The associated overall cost is at most $2
\cdot 2^j$. The cost of the subsequent execution of algorithm $\INT$ on $\tG_j$
is, by our termination rule, at most $h_j \cdot (q \cdot R + 2) \cdot 2^j$, and
thus the cost incurred by repeating $\INT$'s actions on $G$ is at most $(q \cdot
R + 2) \cdot 2^j$. The overall cost is then $\DET(G,A) \leq \sum_{j=0}^k (q
\cdot R + 4) \cdot 2^j = O(R) \cdot 2^k = O(R) \cdot \OPT(G,A) = O(\log |F|
\cdot (\log |C| + \log \log |F|)) \cdot \OPT(G,A)$.

For the running time of \DET, we note that in phase $j$, $\INT$ is run on a
graph $\tG_j$ whose smallest cost is $1$, and hence the largest cost is at
most $\Delta_{\tG_j} = \Delta_{G_j} \leq |F| \cdot |C|$. Thus, by
\autoref{lem:int_ratio}, the running time of $\INT$ in a single phase is
polynomial in $|G|$ and $|A|$, and the number of phases is logarithmic in the
maximum cost occurring in $G$, and thus also polynomial in $|G|$.
\end{proof}


\section{Application to Online Node-Weighted Steiner Tree}
\label{sec:steiner}

Our result for the non-metric FL problem has an immediate application for the
online node-weighted Steiner tree (NWST) problem, where the graph consists of
$\ell$ nodes and an online algorithm is given $k$ terminals to be connected.
Namely, the randomized solution for the online NWST problem by Naor et
al.~\cite{NaPaSi11} is in fact a deterministic polynomial-time ``wrapper''
around randomized routine solving the non-metric FL problem. To solve an
instance of the NWST problem, their algorithm constructs a sub-instance of
non-metric FL with $O(\ell)$ facilities, $O(\ell)$~potential clients, and $O(k)$
active clients. Such instance can be solved by the randomized algorithm of Alon
et~al.~\cite{AlAABN06} with the competitive ratio of $O(\log k \cdot \log
\ell)$. The wrapper adds another $O(\log k)$ factor in the ratio, resulting in
an $O(\log^2 k \cdot \log \ell)$-competitive algorithm. 

Our deterministic algorithm, when applied to this setting would be $O(\log^2
\ell)$-competitive on the constructed non-metric FL sub-instance. Therefore, by
replacing the randomized algorithm by Alon et al.~\cite{AlAABN06} with our
deterministic one, we immediately obtain the first online deterministic solution
for online NWST. 

\begin{corollary}
\label{cor:steiner_tree}
There exists a polynomial-time deterministic online algorithm for the
node-weighted Steiner tree problem, which is $O(\log k \cdot \log^2
\ell)$-competitive on graphs with $\ell$ nodes and $k$ terminals.
\end{corollary}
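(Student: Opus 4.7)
The plan is to treat this as a straightforward consequence of \autoref{thm:main_result} combined with the known black-box reduction from online node-weighted Steiner tree (NWST) to online non-metric facility location due to Naor et al.~\cite{NaPaSi11}. The key observation is that their wrapper is itself a deterministic polynomial-time procedure: given an instance of NWST, it dynamically builds a sequence of non-metric FL sub-instances and uses an oracle solver for these as a black box. Only the oracle (Alon et al.~\cite{AlAABN06}) is randomized in their original analysis, so substituting any deterministic FL solver yields a deterministic NWST algorithm.

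The first step is to recall the parameters of the FL sub-instances produced by the reduction in~\cite{NaPaSi11}: on an input with $\ell$ nodes and $k$ terminals, the constructed facility-client graph has $|F| = O(\ell)$ facilities and $|C| = O(\ell)$ potential clients, while the number of active clients presented online to the FL oracle is $O(k)$. The second step is to plug \DET from \autoref{thm:main_result} into this reduction as the FL oracle. Its competitive ratio on the sub-instance becomes
\[
O\bigl(\log |F| \cdot (\log |C| + \log \log |F|)\bigr) = O\bigl(\log \ell \cdot (\log \ell + \log \log \ell)\bigr) = O(\log^2 \ell).
\]
The third step is to combine this with the $O(\log k)$ overhead that the wrapper of Naor et al. adds on top of the FL competitive ratio, yielding the desired bound of $O(\log k \cdot \log^2 \ell)$.

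I do not expect any serious obstacle here: the only thing that needs to be checked is that the wrapper of~\cite{NaPaSi11} treats its FL oracle as a true black box (so that replacing a randomized oracle with a deterministic one preserves both correctness and the $O(\log k)$ overhead), and that the parameters quoted above are exactly those stated in~\cite{NaPaSi11}. Polynomial running time of the combined algorithm is immediate, since both the wrapper and \DET run in polynomial time.
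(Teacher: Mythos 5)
Your proposal matches the paper's own argument: it also observes that the Naor et al.\ construction is a deterministic wrapper around a non-metric FL oracle on a sub-instance with $O(\ell)$ facilities and potential clients, plugs in the deterministic algorithm of \autoref{thm:main_result} to get $O(\log^2 \ell)$ there, and multiplies by the wrapper's $O(\log k)$ overhead. This is essentially identical to the paper's reasoning, so nothing further is needed.
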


We note that the currently best solution for the node-weighted Steiner tree is
randomized and achieves the ratio of $O(\log^2 \ell)$~\cite{HaLiPa17,HaLiPa14}
and the best known lower bound for deterministic algorithms is $\Omega(\log \ell
\cdot \log k / (\log \log \ell + \log \log k))$~\cite{NaPaSi11,AlAABN09}.
\section{Final Remarks}

We presented a deterministic solution to the non-metric facility location
problem, whose performance nearly matches that of the best randomized one. By
clustering facilities, we encoded dependencies between facilities and clients,
which allowed us later to apply the rounding scheme to facilities only, neglecting the
actual active clients. It would be however interesting and useful to have an
online deterministic rounding routine able to handle such dependencies
internally (e.g., by creating a pessimistic estimator that can be computed and
handled in an online manner), as it is the case for the set cover problem or
throughput-competitive virtual circuit routing~\cite{BucNao09b}.

That said, we believe that our distance clustering techniques can be extended to
other network design problems for which only randomized algorithms existed so
far, e.g., online multicast problems on trees~\cite{AlAABN06}, online group
Steiner problem on trees~\cite{AlAABN06}, or variants of the facility location
problem that are used as building blocks for solutions to other node-weighted
Steiner problems~\cite{HaLiPa14,HaLiPa17}. (For these problems there are no
known direct reductions to the set cover problem). Finally, another open problem
is whether these techniques could be also applied more directly for the
node-weighted Steiner tree, resulting in a better deterministic competitive
ratio.

\section*{Acknowledgments}

We thank Marek Adamczyk and Christine Markarian for helpful discussions. 
We thank anonymous reviewers of an earlier draft for pointing us to the
reduction of Kolen and Tamir~\cite{KolTam90}.

\bibliographystyle{plainurl}
\bibliography{ref}

\end{document}